\crefname{algorithm}{Alg.}{Algs.}
\crefname{section}{Sec.}{Secs.}
\crefname{definition}{Def.}{Defs.}
\crefname{table}{Tab.}{Tabs.}
\crefname{example}{Ex.}{Exs.}
\definecolor{light-gray}{gray}{0.5}
\definecolor{Black}{rgb}{0.0, 0.0, 0.0}
\definecolor{verylightgray}{rgb}{.97,.97,.97}
\lstdefinelanguage{Solidity}{
	keywords=[1]{anonymous, assembly, assert, balance, break, call, callcode, case, catch, class, constant, constructor, continue, contract, debugger, default, delegatecall, delete, do, else, event, export, external, false, finally, for, function, gas, if, implements, import, in, indexed, instanceof, interface, internal, is, length, library, log0, log1, log2, log3, log4, memory, modifier, new, payable, pragma, private, protected, public, pure, push, require, return, returns, revert, selfdestruct, send, storage, struct, suicide, super, switch, then, this, throw, transfer, true, try, typeof, using, view, while, with, addmod, ecrecover, keccak256, mulmod, ripemd160, sha256, sha3}, 
	keywordstyle=[1]\color{blue}\bfseries,
	keywords=[2]{address, bool, byte, bytes, bytes1, bytes2, bytes3, bytes4, bytes5, bytes6, bytes7, bytes8, bytes9, bytes10, bytes11, bytes12, bytes13, bytes14, bytes15, bytes16, bytes17, bytes18, bytes19, bytes20, bytes21, bytes22, bytes23, bytes24, bytes25, bytes26, bytes27, bytes28, bytes29, bytes30, bytes31, bytes32, enum, int, int8, int16, int24, int32, int40, int48, int56, int64, int72, int80, int88, int96, int104, int112, int120, int128, int136, int144, int152, int160, int168, int176, int184, int192, int200, int208, int216, int224, int232, int240, int248, int256, mapping, string, uint, uint8, uint16, uint24, uint32, uint40, uint48, uint56, uint64, uint72, uint80, uint88, uint96, uint104, uint112, uint120, uint128, uint136, uint144, uint152, uint160, uint168, uint176, uint184, uint192, uint200, uint208, uint216, uint224, uint232, uint240, uint248, uint256, var, void, ether, finney, szabo, wei, days, hours, minutes, seconds, weeks, years},	
	keywordstyle=[2]\color{teal}\bfseries,
	keywords=[3]{block, blockhash, coinbase, difficulty, gaslimit, number, timestamp, msg, data, gas, sender, sig, value, now, tx, gasprice, origin, value},	
	keywordstyle=[3]\color{violet}\bfseries,
	identifierstyle=\color{black},
	sensitive=false,
	comment=[l]{//},
	morecomment=[s]{/*}{*/},
	commentstyle=\color{gray}\ttfamily,
	stringstyle=\color{red}\ttfamily,
	morestring=[b]',
	morestring=[b]"
}
\lstdefinestyle{solidity}{
	language=Solidity,
	extendedchars=true,
    upquote          = true,%
	basicstyle=\tiny\ttfamily,
    columns          = [c]fixed,%
    aboveskip        = 0mm,%
    belowskip        = 2mm,%
    keepspaces       = true,%
    mathescape       = true,%
	showstringspaces=false,
	showspaces=false,
	numbers=left,
	numberstyle=\tiny\color{Black!70},
	numbersep=4pt,
	tabsize=2,
	breaklines=true,
	showtabs=false,
	captionpos=b,
    escapechar=¤,
    moredelim=**[is][{\btHL[fill=Black!20]}]{°}{°},
    xleftmargin=1.3em,%
}
\newenvironment{btHighlight}[1][]
{\begingroup\tikzset{bt@Highlight@par/.style={#1}}\begin{lrbox}{\@tempboxa}}
{\end{lrbox}\bt@HL@box[bt@Highlight@par]{\@tempboxa}\endgroup}
\newcommand\btHL[1][]{%
  \begin{btHighlight}[#1]\bgroup\aftergroup\bt@HL@endenv%
}
\def\bt@HL@endenv{%
  \end{btHighlight}%
  \egroup
}
\newcommand{\bt@HL@box}[2][]{%
  \tikz[#1]{%
    \pgfpathrectangle{\pgfpoint{0.3pt}{0pt}}{\pgfpoint{\wd #2}{\ht #2}}%
    \pgfusepath{use as bounding box}%
    \node[anchor=base west,fill=Black!20,outer sep=0pt,inner xsep=0.3pt,inner ysep=0pt,minimum height=\ht\strutbox+0.3pt,#1]{\raisebox{0.3pt}{\strut}\strut\usebox{#2}};
  }%
}
\newif\if@anonymize
  \newcommand{\smartace}{\textsc{Tool}\xspace}
  \newcommand\anonymize[1]{Link removed for double-blind review.}
   \newcommand{\smartace}{\textsc{SmartACE}\xspace}
  \newcommand\anonymize[1]{\scriptsize#1}
\pgfplotsset{compat=newest}
\newcommand{\code}[1]{\lstinline[language=Solidity,basicstyle=\small\ttfamily]
  {#1}}
\newcommand{\mathsol}[1]{\text{\code{#1}}}
\newcommand{\trademark}{\textsuperscript{\textregistered}}
\newcommand{\cA}{\mathcal{A}}
\newcommand{\cC}{\mathcal{C}}
\newcommand{\cN}{\mathcal{N}}
\newcommand{\bbD}{\mathbb{D}}
\newcommand{\nat}{\mathbb{N}}
\newcommand{\sqrbrack}[1]{[#1]}
\newcommand{\bbrackfn}[1]{\llbracket #1 \rrbracket}
\renewcommand{\vec}[1]{\mathbf{#1}}
\newcommand{\vu}{\vec{u}}
\newcommand{\vcv}{\vec{v}}
\newcommand{\vx}{\vec{x}}
\newcommand{\vy}{\vec{y}}
\newcommand{\Land}{\bigwedge}
\renewcommand{\implies}{\Rightarrow}
\renewcommand{\gets}{\mathbin{:=}}
\newcommand{\id}{\textsf{id}}
\newcommand{\rec}{\textsf{map}}
\newcommand{\role}{\textsf{role}}
\newcommand{\data}{\textsf{data}}
\newcommand{\client}{\textsf{client}}
\newcommand{\args}{\textsf{arg}}
\newcommand{\control}{\textsf{control}}
\newcommand{\user}{\textsf{user}}
\newcommand{\inputs}{\textsf{action}}
\newcommand{\lts}{\textsf{lts}}
\newcommand{\local}{\textsf{local}}
\newcommand{\pt}{\textsf{pt}}
\newcommand{\explicit}{\mathit{Explicit}}
\newcommand{\implicit}{\mathit{Implicit}}
\newcommand{\transient}{\mathit{Transient}}
\newcommand{\seahorn}{\textsc{SeaHorn}\xspace}
\newcommand{\verx}{\textsc{VerX}\xspace}
\newcommand{\prop}[1]{{\bf Prop.~#1}}
\newcommand{\appendixcite}[1]
           {\ifthenelse{\boolean{extended}}
                       {\cref{#1}} {the extended version~\cite{arxiv}}}
\begin{document}

\ifthenelse{\boolean{extended}}
           {\title{Compositional Verification of Smart Contracts Through Communication Abstraction (Extended)}}
           {\title{Compositional Verification of Smart Contracts Through Communication Abstraction\thanks{This work was supported, in part, by Individual Discovery Grants from the Natural Sciences and Engineering Research Council of Canada, and Ripple Fellowship. Jorge~A. Navas was supported by NSF grant 1816936.}}}

\titlerunning{Compositional Verification of Smart Contracts}
\authorrunning{S. Wesley et al.}
%

\definecolor{orcidgreen}{HTML}{A6CE39}
\renewcommand\orcidID[1]{\href{https://orcid.org/#1}{\raisebox{-.15em}{\resizebox{1em}{1em}{%
      \begin{pgfpicture}
        \pgfsetcolor{orcidgreen}
        \pgfpathsvg{M106.301,35.435c0,19.572-15.863,35.436-35.436,35.436S35.43,55.007,35.43,35.435S51.293,0,70.865,0
	S106.301,15.863,106.301,35.435z}
        \pgfusepath{fill}
        \pgfsetcolor{white}
        \pgfpathsvg{M59.321,32.39v13.398h-4.264V16.14h4.264V32.39z}
        \pgfpathsvg{M65.578,16.112h11.572c9.772,0,15.724,7.226,15.724,14.839c0,7.004-4.816,14.838-15.779,14.838H65.578 V16.112z M69.841,41.941h6.562c8.305,0,12.098-5.039,12.098-10.99c0-3.654-2.215-10.99-11.877-10.99h-6.782V41.941z}
        \pgfpathsvg{M57.189,54.758c-1.55,0-2.796-1.245-2.796-2.796c0-1.523,1.246-2.797,2.796-2.797s2.796,1.273,2.796,2.797 C59.985,53.485,58.74,54.758,57.189,54.758z}
        \pgfusepath{fill}
      \end{pgfpicture}%
  }}}}

\author{Scott Wesley\inst{1}\orcidID{0000-0002-6708-2122} \and
        Maria Christakis\inst{2} \and
        Jorge A. Navas\inst{3}\orcidID{0000-0002-0516-1167} \and
        Richard Trefler\inst{1} \and\\
        Valentin W{\"u}stholz\inst{4} \and
        Arie Gurfinkel\inst{1}\orcidID{0000-0002-5964-6792}}

\authorrunning{S. Wesley et al.}

\institute{University of Waterloo, Canada \and
           MPI-SWS, Germany \and
           SRI International, USA \and
           ConsenSys, Germany}

\maketitle              
\begin{abstract}
  Solidity smart contracts are programs that manage up to $2^{160}$ users on a
  blockchain. Verifying a smart contract relative to all users is intractable
  due to state explosion. Existing solutions either restrict the number of users
  to under-approximate behaviour, or rely on manual proofs. In this paper, we
  present \emph{local bundles} that reduce contracts with arbitrarily many users
  to sequential programs with a few \emph{representative} users. Each
  representative user abstracts concrete users that are locally symmetric to
  each other relative to the contract and the property. Our abstraction is
  semi-automated. The representatives depend on communication patterns, and are
  computed via static analysis. A summary for the behaviour of each
  representative is provided manually, but a default summary is often
  sufficient. Once obtained, a local bundle is amenable to sequential static
  analysis. We show that local bundles are relatively complete for parameterized
  safety verification, under moderate assumptions. We implement local bundle
  abstraction in \smartace, and show order-of-magnitude speedups compared to a
  state-of-the-art verifier.
\end{abstract}


\section{Introduction}
\label{Sec:Intro}

Solidity smart contracts are distributed programs that facilitate information
flow between users. Users alternate and execute predefined transactions, that
each terminate within a predetermined number of steps. Each user (and contract)
is assigned a unique, $160$-bit address, that is used by the smart contract to
map the user to that user's data. In theory, smart contracts are finite-state
systems with $2^{160}$ users. However, in practice, the state space of a smart
contract is huge---with at least $2^{2^{160}}$ states to accommodate all users
and their data (conservatively counting one bit per user). In this paper, we
consider the challenge of automatically verifying Solidity smart contracts that
rely on user data.

A naive solution for smart contract verification is to verify the finite-state
system directly. However, verifying systems with at least $2^{2^{160}}$ states
is intractable. The naive solution fails because the state space is exponential
in the number of users. Instead, we infer correctness from a small number of
representative users to ameliorate state explosion. To restrict a contract to
fewer users, we first generalize to a \emph{family} of finite-state systems
parameterized by the number of users. In this way, smart contract verification
is reduced to parameterized verification.

\begin{figure}[t]
    \centering
    \lstinputlisting[style=solidity,multicols=2]{code/sol/auction.sol}
    \caption{A smart contract that implements a simple auction.}
    \label{Fig:Auction}
\end{figure}

\begin{figure}[t]
    \centering
    \lstinputlisting[style=solidity,multicols=2]{code/sol/harness.sol}
    \caption{A harness to verify \prop{1} (ignore the highlighted lines)
             and \prop{2}.}
    \label{Fig:Harness}
\end{figure}

For example, consider \code{Auction} in \cref{Fig:Auction} (for now, ignore the
highlighted lines). In \mbox{\code{Auction},} each user starts with a bid of
$0$. Users alternate, and submit increasingly larger bids, until a designated
manager stops the auction. While the auction is not stopped, a non-leading user
may withdraw their bid\footnote{For simplicity of presentation, we do not use
Ether, Ethereum's native currency.}. \code{Auction} satisfies \prop{1}:
``\emph{Once \mbox{\code{stop()}} is called, all bids are immutable}.'' \prop{1}
is satisfied since \mbox{\code{stop()}} sets \code{stopped} to true, no function
sets \code{stopped} to false, and while \code{stopped} is true neither
\mbox{\code{bid()}} nor \mbox{\code{withdraw()}} is enabled. Formally, \prop{1}
is initially true, and remains true due to \prop{1b}: ``\emph{Once
\mbox{\code{stop()}} is called, \code{stopped} remains true}.'' \prop{1} is said
to be inductive relative to its \emph{inductive strengthening} \prop{1b}. A
\emph{Software Model Checker (SMC)} can establish \prop{1} by an exhaustive
search for its inductive strengthening. However, this requires a bound on the
number of addresses, since a search with all $2^{160}$ addresses is intractable.

A bound of at least four addresses is necessary to represent the zero-account
(i.e., a null user that cannot send transactions), the smart contract account,
the manager, and an arbitrary sender. However, once the arbitrary sender submits
a bid, the sender is now the leading bidder, and cannot withdraw its bid. To
enable \mbox{\code{withdraw()},} a fifth user is required. It follows by applying
the results of~\cite{KaiserKroening2010}, that a bound of five addresses is also
sufficient, since users do not read each other's bids, and
adding a sixth user does not enable additional changes to
\mbox{\code{leadingBid}~\cite{KaiserKroening2010}}. The bounded system, known as
a harness, in \cref{Fig:Harness} assigns the zero-account to address 0, the
smart contract account to address 1, the manager to address 2, the arbitrary
senders to addresses 3 and 4, and then executes an unbounded sequence of
arbitrary function calls. Establishing \prop{1} on the harness requires finding
its inductive strengthening. A strengthening such as \prop{1b} (or, in general,
a counterexample violating \prop{1}) can be found by an SMC, directly on the
harness code.

The above bound for \prop{1} also works for checking all control-reachability properties of \code{Auction}. This, for example, follows by applying the results of~\cite{KaiserKroening2010}. That is, \code{Auction} 
has a \emph{Small Model Property (SMP)}~(e.g.,~\cite{KaiserKroening2010,AbdullaHH13}) for such properties. However, not all contracts enjoy an SMP. Consider
\prop{2}: ``\emph{The sum of all active bids is at least \code{leadingBid}}.''
\code{Auction} satisfies \prop{2} since the leading bid is never withdrawn. To
prove \code{Auction} satisfies \prop{2}, we instrument the code to track the
current sum, through the highlighted lines in \cref{Fig:Auction}. With the
addition of \code{\_sum}, \code{Auction} no longer enjoys an SMP. Intuitively,
each user enables new combinations of \code{\_sum} and \code{leadingBid}. As a
proof, assume that there are $N$ users (other than the zero-account, the smart
contract account, and the manager) and let $S_N = 1 + 2 + \cdots + N$. In every
execution with $N$ users, if \code{leadingBid} is $N + 1$, then \code{\_sum} is
less than $S_{N + 1}$, since active bids are unique and $S_{N + 1}$ is the sum
of $N + 1$ bids from $1$ to $N  + 1$. However, in an execution with $N + 1$
users, if the $i$-th user has a bid of $i$, then \code{leadingBid} is $N + 1$
and \code{\_sum} is $S_{N + 1}$. Therefore, increasing $N$ extends the reachable
combinations of \code{\_sum} and \code{leadingBid}. For example, if $N = 2$,
then $S_3 = 1 + 2 + 3 = 6$. If the leading bid is $3$, then the second highest
bid is at most $2$, and, therefore, $\mathsol{\_sum} \le 5 < S_3$. However, when
$N = 3$, if the three active bids are $\{ 1, 2, 3 \}$, then \code{\_sum} is $S_3$.
Therefore, instrumenting \code{Auction} with \code{\_sum} violates the SMP of the original \code{Auction}.

Despite the absence of such an SMP, each function of \code{Auction} interacts
with at most one user per transaction. Each user is classified as either the
zero-account, the smart contract, the manager, or an arbitrary sender. In fact, all
arbitrary senders are indistinguishable with respect to \prop{2}. For example, if
there are exactly three active bids, $\{ 2, 4, 8 \}$, it does not matter which user
placed which bid. The leading bid is $8$ and the sum of all bids is $14$. On the
other hand, if the leading bid is $8$, then each participant of \code{Auction} must
have a bid in the range of $0$~to~$8$. To take advantage of these classes,
rather than analyze \code{Auction} relative to all $2^{160}$ users, it is
sufficient to analyze \code{Auction} relative to a representative user from each
class. In our running example, there must be representatives for the
zero-account, the smart contract account, the manager, and an (arbitrary)
sender. The key idea is that each representative user can correspond to one or
\emph{many} concrete users.

Intuitively, each representative user summarizes the concrete users in its
class. If a representative's class contains a single concrete user, then there
is no difference between the concrete user and the representative user. For
example, the zero-account, the smart contract account, and the manager each
correspond to single concrete users. The addresses of these users, and in turn,
their bids, are known with absolute certainty. On the other hand, there are many
arbitrary senders. Since senders are indistinguishable from each other, the
precise address of the representative sender is unimportant. What matters is
that the representative sender does not share an address with the zero-account,
the smart contract account, nor the manager. However, this means that at the
start of each transaction the location of the representative sender is not
absolute, and, therefore, the sender has a range of possible bids. To account
for this, we introduce a predicate that is true of all initial bids, and holds
inductively across all transactions. We provide this predicate manually, and use
it to over-approximate all possible bids. An obvious predicate for
\code{Auction} is that all bids are at most \mbox{\code{leadingBid},} but this
predicate is not strong enough to prove \prop{2}. For example, the
representative sender could first place a bid of $10$, and then (spuriously)
withdraw a bid of $5$, resulting in a sum of $5$ but a leading bid of $10$. A
stronger predicate, that is adequate to prove \prop{2}, is given by $\theta_U$:
``\emph{Each bid is at most \mbox{\code{leadingBid}.} If a bid is not
\mbox{\code{leadingBid},} then its sum with \code{leadingBid} is at most
\mbox{\code{\_sum}.}}''

Given $\theta_U$, \prop{2} can be verified by an SMC. This requires a new
harness, with representative, rather than concrete, users. The new harness,
\cref{Fig:Harness} (now including the highlighted lines), is similar to the SMP
harness in that the zero-account, the smart contract account, and the manager
account are assigned to addresses 0, 1, and 2, respectively, followed by an
unbounded sequence of arbitrary calls. However, there is now a single sender
that is assigned to address~3 (line~\ref{line:harness-instr-sender}). That is,
the harness uses a fixed configuration of representatives in which the fourth
representative is the sender. Before each function call, the sender's bid is set
to a non-deterministic value that satisfies $\theta_U$
(lines~\ref{line:harness-instr-start}--\ref{line:harness-instr-end}). If the new
harness and \prop{2} are provided to an SMC, the SMC will find an inductive
strengthening such as, ``\emph{The leading bid is at most the sum of all
bids}.''

The harness in \cref{Fig:Harness} differs from existing smart contract
verification techniques in two ways. First, each address in \cref{Fig:Harness}
is an abstraction of one or more concrete users. Second, \code{msg.sender} is
restricted to a finite address space by
lines~\ref{line:harness-instr-sender-start}~to~\ref{line:harness-instr-sender}.
If these lines are removed, then an inductive invariant must constrain
all cells of \code{bids}, to accommodate \code{bids[msg.sender]}. This requires
quantified invariants over arrays that is challenging to automate. By
introducing
lines~\ref{line:harness-instr-sender-start}~to~\ref{line:harness-instr-sender},
a quantifier-free predicate, such as our $\theta_U$, can directly constrain cell
\code{bids[msg.sender]} instead. Adding lines  \ref{line:harness-instr-sender-start}--\ref{line:harness-instr-sender} makes the contract finite state. Thus, its verification problem is decidable and can be 
handled by existing SMCs. However, as illustrated by \prop{2},
the restriction on each user must not exclude feasible
counterexamples. Finding such a restriction is the focus of this paper.

In this paper, we present a new approach to smart contract verification. We
construct finite-state abstractions of parameterized smart contracts, known as
\emph{local bundles}. A local bundle generalizes the harness in
\cref{Fig:Harness}, and is constructed from a set of representatives and their
predicates. When a local bundle and a property are provided to an SMC, there
are three possible outcomes. First, if a predicate does not over-approximate its
representative, a counterexample to the predicate is returned. Second, if the
predicates do not entail the property, then a counterexample to verification is
returned (this counterexample refutes the proof, rather than the property
itself). Finally, if the predicates do entail the property, then an inductive
invariant is returned. As opposed to deductive smart contract solutions, our approach
finds inductive strengthenings
automatically~\cite{HajduJovanovic2019,ZhongCheang2020}. As opposed to other
model checking solutions for smart contracts, our approach is not limited to pre- and
post-conditions~\cite{KalraGoel2018}, and can scale to $2^{160}$
users~\cite{Kolb2020}.

Key theoretical contributions of this paper are to show that verification
with local bundle abstraction is an instance of Parameterized Compositional Model
Checking (PCMC)~\cite{NamjoshiTrefler2016} and the automation of the side-conditions for its applicability. Specifically, \cref{Thm:SafetyCheck}
shows that the local bundle abstraction is a sound proof rule, and a static analysis algorithm (\code{PTGBuilder} in
\cref{sec:communication}) computes representatives  so that the rule is applicable. Key practical contributions are the implementation and the evaluation of the method in a new smart contract verification tool \smartace,
using \seahorn~\cite{GurfinkelKahsai2015} for SMC. \smartace takes as input a contract
and a predicate. Representatives are inferred automatically from the contract, by
analyzing the communication in each transaction. The predicate is then validated
by \seahorn, relative to the representatives. If the predicate is correct, then a local bundle, as in \cref{Fig:Harness}, is returned.

The rest of the paper is structured as follows. \cref{Sec:Background} reviews
parameterized verification. \cref{Sec:SyntaxSemantics} presents MicroSol, a
subset of Solidity with network semantics. \cref{sec:communication} relates user
interactions to representatives. We formalize user interactions as
\emph{Participation Topologies (PTs)}, and define \emph{PT Graphs (PTGs)} to
over-approximate PTs for arbitrarily many users. Intuitively, each PTG
over-approximates the set of representatives. We show that a PTG is computable
for every MicroSol program.
\cref{sec:locality} defines local bundles and proves that our approach is sound.
\cref{sec:eval} evaluates \smartace and shows that it can outperform \verx, a
state-of-the-art verification tool, on all but one \verx benchmark.

\section{Background}
\label{Sec:Background}
In this section, we briefly recall \emph{Parameterized Compositional Model
Checking (PCMC)}~\cite{NamjoshiTrefler2016}. We write $\vu = ( u_0, \ldots,
u_{n-1} )$ for a vector of $n$ elements, and $\vu_i$ for the $i$-th element of
$\vu$. For a natural number $n \in \nat$, we write $[n]$ for $\{ 0, \ldots, n-1
\}$.

\paragraph{Labeled Transition Systems.}
A \emph{labeled transition system (LTS)}, $M$, is a tuple $(S, P, T, s_0)$,
where $S$ is a set of states, $P$ is a set of actions, $T: S \times P \to 2^S$
is a transition relation, and $s_0 \in S$ is an initial state. $M$ is
\emph{deterministic} if $T$ is a function, $T: S \times P \to S$. A (finite)
\emph{trace} of $M$ is an alternating sequence of states and actions, $(s_0,
p_1, s_1, \ldots, p_k, s_k)$, such that $\forall i \in [k] \cdot s_{i+1} \in
T(s_{i}, p_{i+1})$. A state $s$ is \emph{reachable} in $M$ if $s$ is in some
trace $(s_0, p_1, \ldots, s_k)$ of $M$; that is, $\exists i \in [k + 1] \cdot
s_i = s$. A \emph{safety property} for $M$ is a subset of states (or a
predicate\footnote{Abusing notation, we refer to a subset of states $\varphi$ as
a \emph{predicate} and do not distinguish between the syntactic form of
$\varphi$ and the set of states that satisfy it.}) $\varphi \subseteq S$. $M$
satisfies $\varphi$, written $M \models \varphi$, if every reachable state of
$M$ is in $\varphi$.  

Many transition systems are parameterized. For instance, a client-server
application is parameterized by the number of clients, and an array-manipulating
program is parameterized by the number of cells. In both cases, there is a
single \emph{control process} that interacts with many \emph{user
processes}. Such systems are called \emph{synchronized control-user networks
(SCUNs)}~\cite{NamjoshiTrefler2016}. We let $N$ be the number of processes, and
$[N]$ be the process identifiers. We consider SCUNs in which users only
synchronize with the control process and do not execute code on their own.

An SCUN $\cN$ is a tuple $(S_C, S_U, P_I, P_S, T_I, T_S, c_0, u_0)$, where $S_C$
is a set of control states, $S_U$ a set of user states, $P_I$ a set of internal
actions, $P_S$ a set of synchronized actions, $T_I: S_C \times P_I \to S_C$ an
internal transition function, $T_S: S_C \times S_U \times P_S \to S_C \times
S_U$ a synchronized transition function, $c_0 \in S_C$ is the initial control
state, and $u_0 \in S_U$ is the initial user state. The semantics of $\cN$ are
given by a parameterized LTS, $M(N) \gets (S, P, T, s_0)$, where $S \gets S_C
\times \left( S_U \right)^N$, $P \gets P_I \cup \left( P_S \times [N] \right)$,
$s_0 \gets (c_0, u_0, \ldots, u_0)$, and $T: S \times P \rightarrow S$ such that:
\begin{inparaenum}[(1)]
\item if $p \in P_I$, then $T((c, \vu), p) = (T_I(c, p), \vu)$, and
\item if $(p, i) \in P_S \times [N]$, then $T((c, \vu), (p, i)) = ( c', \vu' )$
      where $( c', \vu'_i ) = T_S(c, \vu_i, p)$, and $\forall j \in [N]
      \backslash \{ i \} \cdot \vu'_j = \vu_j$.
\end{inparaenum}

\paragraph{Parameterized Compositional Model Checking (PCMC).}
Parameterized systems have parameterized
properties~\cite{GurfinkelSharon2016,NamjoshiTrefler2016}. A \emph{$k$-universal
safety property}~\cite{GurfinkelSharon2016} is a predicate $\varphi \subseteq
S_C \times (S_U)^k$. A state $( c, \vu )$ satisfies predicate $\varphi$ if
$\forall \{ i_1, \ldots, i_k \} \subseteq [N] \cdot \varphi(c, \vu_{i_1},
\ldots, \vu_{i_k})$. A parameterized system $M(N)$ satisfies predicate $\varphi$
if \mbox{$\forall N \in \nat \cdot M(N) \models \varphi$}. For example, \prop{1}
(\cref{Sec:Intro}) of \code{SimpleAuction} (\cref{Fig:Auction}) is
$1$-universal: ``\emph{For every user $u$, if \code{stop()} has been called,
then $u$ is immutable}.''

Proofs of $k$-universal safety employ compositional reasoning, e.g.,
\cite{AbdullaHH16,GurfinkelSharon2016,NamjoshiTrefler2016,OwickiGries1976}.
Here, we use PCMC~\cite{NamjoshiTrefler2016}. The keys to PCMC are
\emph{uniformity}---the property that finitely many neighbourhoods are
distinguishable---and a \emph{compositional invariant}---a summary of the
reachable states for each equivalence class, that is closed under the actions of
every other equivalence class. For an SCUN, the compositional invariant is
given by two predicates $\theta_C \subseteq S_C$ and $\theta_U \subseteq S_C
\times S_U$ satisfying:
\begin{enumerate}
\item[\textbf{Initialization}] $c_0 \in \theta_C$ and $( c_0, u_0 ) \in
     \theta_U$;
\item[\textbf{Consecution 1}]  If $c \in \theta_C$, $( c, u ) \in \theta_U$, $p
     \in P_S$, and $( c', u' ) \in T_S( c, u, p )$, then $c' \in \theta_C$ and
     $( c', u' ) \in \theta_U$;
\item[\textbf{Consecution 2}]  If $c \in \theta_C$, $( c, u ) \in \theta_U$, $p
     \in P_C$, and $c' = T_I( c, p )$, then $c' \in \theta_C$ and $( c', u ) \in
     \theta_U$;
\item[\textbf{Non-Interference}]  If $c \in \theta_C$, $( c, u ) \in \theta_U$,
     $( c, v ) \in \theta_U$, $u \neq v$, $p \in P_S$, and $( c', u' ) = T_S( c,
     u, p )$, then $( c', v ) \in \theta_C$.
\end{enumerate}
By PCMC~\cite{NamjoshiTrefler2016}, if $\forall c \in \theta_C \cdot \forall \{
(c, u_1), \ldots, (c, u_k) \} \subseteq \theta_U \cdot \varphi(c, u_1, \ldots,
u_k)$, then $M \models \varphi$. This is as an extension of
Owicki-Gries~\cite{OwickiGries1976}, where $\theta_C$ summarizes the acting
process and $\theta_U$ summarizes the interfering process. For this reason, we
call $\theta_C$ the \emph{inductive invariant} and $\theta_U$ the
\emph{interference invariant}.

\section{MicroSol: Syntax and Semantics}
\label{Sec:SyntaxSemantics}

\begin{figure}[t]
  \scriptsize
  \input{diagrams/microsol_grammar}
  \normalsize
  \vspace{-1.2em}
  \caption{The formal grammar of the MicroSol language.}
  \vspace{-1.5em}
  \label{Fig:MicroSol}
\end{figure}

This section provides network semantics for MicroSol, a subset of
Solidity\footnote{\scriptsize \url{https://docs.soliditylang.org/}}. Like Solidity, MicroSol is an
imperative object-oriented language with built-in communication operations. The
syntax of MicroSol is in \cref{Fig:MicroSol}. MicroSol restricts Solidity to
a core subset of communication features. For example, MicroSol does not include
inheritance, cryptographic operations, or mappings between addresses. In our
evaluation (\cref{sec:eval}), we use a superset of MicroSol, called MiniSol (see
\appendixcite{Appendix:MiniSol}), that extends our semantics to a wider set of smart
contracts. Throughout this section, we illustrate MicroSol using \code{Auction}
in \cref{Fig:Auction}.

A MicroSol \emph{smart contract} is similar to a class in object-oriented
programming, and consists of variables, and transactions (i.e., functions) for
users to call. A transaction is a deterministic sequence of operations. Each
smart contract user has a globally unique identifier, known as
an \emph{address}. We view a smart contract as operating in an SCUN: the control
process executes each transaction sequentially, and the user processes are
contract users that communicate with the control process. Users in the SCUN
enter into a transaction through a synchronized action, then the control process
executes the transaction as an internal action, and finally, the users are
updated through synchronized actions. For simplicity of presentation, each
transaction is given as a global transition.

A constructor is a special transaction that is executed once after contract
creation. Calls to \code{new} (i.e., creating new smart contracts) are
restricted to constructors. \code{Auction} in \cref{Fig:Auction} is a smart
contract that defines a constructor (line~\ref{line:constructor}), three other
functions (lines~\ref{line:bid}, \ref{line:withdraw}, and~\ref{line:stop}), and
four state variables (lines~\ref{line:vars-start}--\ref{line:vars-end}).

MicroSol has four types: \emph{address}, \emph{numeric} (including \code{bool}),
\emph{mapping}, and \emph{contract reference}. Address variables prevent
arithmetic operations, and numeric variables cannot cast to address variables.
Mapping and contract-reference variables correspond to dictionaries and object
pointers in other object-oriented languages. Each typed variable is further
classified as either \emph{state}, \emph{input}, or \emph{local}.  We use
\emph{role} and \emph{data} to refer to state variables of address and numeric
types, respectively. Similarly, we use \emph{client} and \emph{argument} to
refer to inputs of address and numeric types, respectively. In \code{Auction} of
\cref{Fig:Auction}, there is $1$ role \mbox{(\code{manager}),} $2$ contract data
\mbox{(\code{leadingBid}} and \mbox{\code{stopped}),} $1$ mapping
\mbox{(\code{bids})}, $1$ client common to all transactions
\mbox{(\code{msg.sender})}, and at most $1$ argument in any transaction
\mbox{(\code{amount}).}

Note that in MicroSol, \emph{user} denotes any user process within a SCUN. A
\emph{client} is defined relative to a transaction, and denotes a user passed as
an input.

\newcommand{\tran}{\mathit{tr}}
\newcommand{\cM}{\mathcal{M}}

\paragraph{Semantics of MicroSol.}
Let $\cC$ be a MicroSol program with a single transaction $\tran$ (see
\appendixcite{Appendix:Lts} for multiple transactions). An $N$-user \emph{bundle} is an
$N$-user network of several (possibly identical) MicroSol programs. The
semantics of a bundle is an LTS, $\lts( \cC, N ) \gets ( S, P, f, s_0 )$, where
$S_C \gets \control( \cC, [N] )$ is the set of control states, $S_U \gets \user(
\cC, [N] )$, is the set of user states, $s_{\bot}$ is the error state, $S
\subseteq \left( S_C \cup \{ s_{\bot} \} \right) \times (S_U)^N$ is the set of LTS states, $P
\gets \inputs( \cC, [N] )$ is the set of actions, $f: S \times P \rightarrow S$
is the \emph{transition function}, and $s_0$ is the initial state. We assume,
without loss of generality, that there is a single control process\footnote{Restrictions
place on \code{new} ensure that the number of MicroSol smart contracts in a bundle is a
static fact. Therefore, all control states are synchronized, and can be combined into a
product machine.}.

Let $\bbD$ be the set of $256$-bit unsigned integers. The state space of a smart
contract is determined by the address space, $\cA$, and the state variables of
$\cC$. In the case of $\lts( \cC, N )$, the address space is fixed to $\cA =
[N]$. Assume that $n$, $m$, and $k$ are the number of roles, data, and mappings
in $\cC$, respectively. State variables are stored by their numeric indices
(i.e., variable~$0$, $1$, etc.). Then, $\control( \cC, \cA ) \subseteq \cA^n
\times \bbD^m$ and $\user( \cC, \cA ) \subseteq \cA \times \bbD^k$. For $c = (
\vx, \vy ) \in \control( \cC, \cA)$, $\role( c, i ) = \vx_i$ is the $i$-th role
and $\data( c, i ) = \vy_i$ is the $i$-th datum. For $u = (z, \vy) \in \user(
\cC, \cA )$, $z$ is the address of $u$, and $\rec( u ) = \vy$ are the mapping
values of $u$.

Similarly, actions are determined by the address space, $\cA$, and the input variables
of $\tran$. Assume that $q$ and $r$ are the number of clients and arguments of
$\tran$, respectively. Then $\inputs( \cC, \cA ) \subseteq \cA^q \times \bbD^r$.
For $p = (\vx, \vy) \in \inputs( \cC, \cA )$, $\client( p, i ) = \vx_i$ is the
$i$-th client in $p$ and $\args( p, i ) = \vy_i$ is the $i$-th argument in $p$.
For a fixed $p$, we write $f_p( s, \vu )$ to denote $f( (s, \vu), p )$.

The initial state of $\lts( \cC, N )$ is $s_0 \gets (c, \vu) \in \control( \cC,
[n] ) \times \user( \cC, [n] )^N$, where $c = ( \vec{0}, \vec{0} )$, $\forall i
\in [N] \cdot \rec(\vu_i) = \vec{0}$, and $\forall i \in [N] \cdot \id( \vu_i )
= i$. That is, all variables are zero-initialized and each user has a unique
address.

An $N$-user transition function is determined by the (usual) semantics of
$\tran$, and a \emph{bijection} from addresses to user indices, $\cM: \cA
\rightarrow [N]$. If $\cM( a ) = i$, then address $a$ belongs to user $\vu_i$.
In the case of $\lts( \cC, N )$, the $i$-th user has address $i$, so $\cM(i) =
i$. We write $f \gets \bbrackfn{\cC}_{\cM}$, and given an action $p$, $f_p$
updates the state variables according to the source code of $\tran$ with respect
to $\cM$. If an \code{assert} fails or an address is outside of $\cA$,
then the error state $s_{\bot}$ is returned. If a \code{require}
fails, then the state is unchanged. Note that $f$ preserves the address of each
user.
 
For example, $\lts(\text{Auction}, 4) = (S, P, f, s_0)$ is the $4$-user bundle
of \code{Auction}. Assume that $(c, \vu)$ is the state reached after evaluating
the constructor. Then $\role( c, 0 ) = 2$, $\data( c, 0 ) = 0$, $\data( c, 1 ) =
0$, and $\forall i \in [4] \cdot \rec( \vu_i )_0 = 0$. That is, the manager is
at address 2, the leading bid is 0, the auction is not stopped, and there are no
active bids. This is because variables are zero-indexed, and \code{stopped} is
the second numeric variable (i.e., at index 1). If the user at address $3$ placed
a bid of $10$, this corresponds to $p \in P$ such that $\client( p, 0 ) = 3$ and
$\args( p, 0 ) = 10$. A complete LTS for this example is in
\appendixcite{Appendix:Lts}.

\paragraph{Limitations of MicroSol.}
MicroSol places two restrictions on Solidity. First, addresses are not numeric.
We argue that this restriction is reasonable, as address manipulation is a form
of pointer manipulation. Second, \code{new} must only appear in constructors. In
our evaluation (\cref{sec:eval}), all calls to \code{new} could be moved
into a constructor with minimal effort. We emphasize that the second restriction
does not preclude the use of abstract interfaces for arbitrary contracts.

\section{Participation Topology}
\label{sec:communication}

The core functionality of any smart contract is communication between users.
Usually, users communicate by reading from and writing to designated mapping
entries. That is, the communication paradigm is shared memory. However, it is
convenient in interaction analysis to re-imagine smart contracts as having
rendezvous synchronization in which users explicitly participate in message
passing. In this section, we formally re-frame smart contracts with explicit
communication by defining a (semantic) participation topology and its
abstractions.

\begin{figure}[t]
    \centering
    \begin{subfigure}{.49\linewidth}
        \centering
        \includegraphics[scale=0.75]{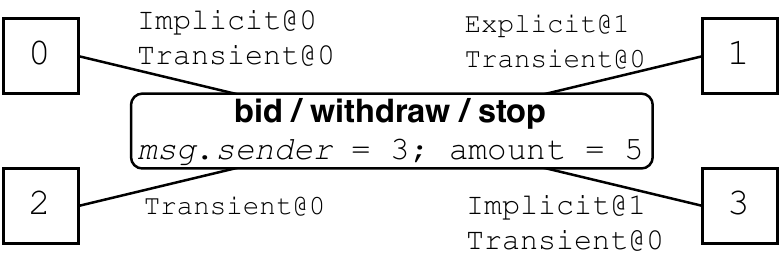}
        \caption{A PT for $4$ users and a fixed action.}
        \label{Fig:Topology:PT}
    \end{subfigure}
    \hfill
    \begin{subfigure}{.49\linewidth}
        \centering
        \includegraphics[scale=0.75]{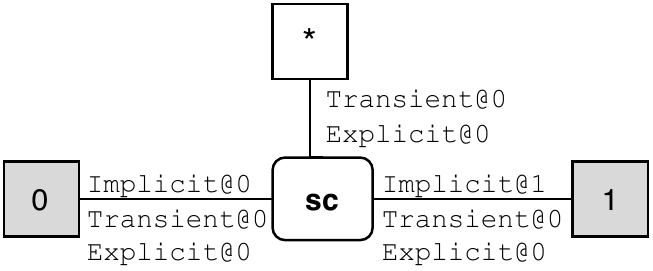}
        \caption{The PTG from \code{PTGBuilder}.}
        \label{Fig:Topology:PTG}
    \end{subfigure}
    \RawCaption{\caption{A PT of \code{Auction} contrasted with a PTG for
                \code{Auction}.} \label{Fig:Topology}}
\end{figure}

A user $u$ participates in communication during a transaction $f$ whenever the
state of $u$ affects execution of $f$ or $f$ affects a state of $u$. We call
this \emph{influence}. For example, in \cref{Fig:Auction}, the sender influences
\code{withdraw} on line~\ref{line:influence1}. Similarly, \code{withdraw}
influences the sender on line~\ref{line:influence2}. In all cases, the influence
is \emph{witnessed} by the state of the contract and the configuration of users
that exhibit the influence.

Let $\cC$ be a contract, $N \in \nat$ be the network size, $( S, P, f, s_0 ) =
\lts(\cC, N)$, and $p \in P$. A user with address $a \in \nat$ \emph{influences}
transaction $f_p$ if there exists an $s, r, r' \in \control( \cC, [N] )$,
$\vu, \vu', \vcv, \vcv' \in \user( \cC, [N] )^N$, and $i \in [N]$ such that:
\begin{enumerate}
\item $\id( \vu_i ) = a$;
\item $\forall j \in \sqrbrack{N} \cdot \left( \vu_j = \vcv_j \right) \iff
      \left( i \ne j \right)$;
\item $( r, \vu' ) = f_p( s, \vu )$ and $( r', \vcv' ) = f_p( s, \vcv )$;
\item $\left( r = r' \right) \implies \left( \exists j \in \sqrbrack{N} \setminus
      \{ i \} \cdot  \vu'_j \neq \vcv'_j \right)$.
\end{enumerate}
That is, there exists two network configurations that differ only in the state of
the user $\vu_i$, and result in different network configurations after applying $f_p$. In practice,
$f_p$ must compare the address of $\vu_i$ to some other address, or must use the state of
$\vu_i$ to determine the outcome of the transaction. The tuple $( s, \vu, \vcv )$ is a
\emph{witness} to the influence of $a$ over transaction $f_p$.
A user with address $a \in \nat$ is \emph{influenced} by transaction $f_p$ if
there exists an $s, s' \in \control( \cC, [N] )$, $\vu, \vu' \in \user( \cC, [N] )^N$, and
$i \in \sqrbrack{N}$ such that:
\begin{enumerate}
\item $\id( \vu_i ) = a$;
\item $( s', \vu' ) = f_p( s, \vu )$;
\item $\vu'_i \ne \vu_i$.
\end{enumerate}
That is, $f_p$ must write into the state of $\vu_i$, and the changes must persist after
the transaction terminates. The tuple $( s, \vu )$ is a \emph{witness} to the influence of
transaction $f_p$ over user $a$.

\begin{definition}[Participation]
\label{Def:Participation}
    A user with address $a \in \nat$ \emph{participates} in a transaction $f_p$
    if either $a$ influences $f_p$, witnessed by some $( s, \vu, \vcv )$, or
    $f_p$ influences $a$, witnessed by some $( s, \vu )$. In either case, $s$
    is a \emph{witness state}. 
\end{definition}

Smart contracts facilitate communication between many users across many
transactions. We need to know every possible participant, and the cause of their
participation---we call this the \emph{participation topology (PT)}. A PT
associates each communication (sending or receiving) with one or more
participation classes, called \emph{explicit}, \emph{transient}, and
\emph{implicit}. The participation is \emph{explicit} if the participant is a
client of the transaction; \emph{transient} if the participant has a role
during the transaction; \emph{implicit} if there is a state such that the
participant is neither a client nor holds any roles. In the case of
MiniSol, all implicit participation is due to literal address values, as
users designated by literal addresses must participate regardless of clients and
roles. An example of implicit participation is when a client is compared to the
address of the zero-account (i.e.,~\code{address(0)}) in \cref{Fig:Auction}.

\begin{definition}[Participation Topology]
    \label{Def:PT}
    A \emph{Participation Topology} of a transaction $f_p$ is a tuple $\pt(\cC,
    N, p) \gets ( \explicit, \transient, \implicit )$, where:
    \begin{enumerate}
    \item $\explicit \subseteq \nat \times [N]$ where $( i, a ) \in \explicit$
          iff $a$ participates during $f_p$, with $\client( p, i ) = a$;
    \item $\transient \subseteq \nat \times [N]$ where  $( i, a ) \in
          \transient$ iff $a$ participates during $f_p$, as witnessed by a state
          $s \in \control( \cC, [N] )$, where $\role( s, i ) = a$;
    \item $\implicit \subseteq [N]$ where $a \in \implicit$ iff $a$ participates
          during $f_p$, as witnessed by a state $s \in \control( \cC, [N] )$,
          where $\forall i \in \nat$, $\role( s, i ) \ne a$ and $\client( p, i )
          \ne a$.
    \end{enumerate}
\end{definition}

For example, \cref{Fig:Topology:PT} shows a PT for any function of
\cref{Fig:Auction} with $4$ users. From \cref{Sec:Intro}, it is clear that each
function can have an affect. The zero-account and smart contract account are
both implicit participants, since changing either account's address to 3 would
block the affect of the transaction. The manager is a transient participant and
the sender is an explicit participant, since the (dis)equality of their
addresses is asserted at lines~\ref{line:neq1}, \ref{line:neq2},
and~\ref{line:eq1}.

\cref{Def:PT} is semantic and dependent on actions. A syntactic summary of all
PTs for all actions is required to reason about communication. This summary is
analogous to over-approximating control-flow with a ``control-flow
graph''~\cite{Allen1970}. This motivates the \emph{Participation Topology Graph
(PTG)} that is a syntactic over-approximation of all possible PTs, independent
of network size. A PTG has a vertex for each user and each action, such that
edges between vertices represent participation classes. In general, a single
vertex can map to many users or actions.

\newcommand{\ap}{\textit{AP}}
PTG edges are labeled by participation classes. For any contract $\cC$, there
are at most $m$ explicit classes and $n$ transient classes, where $n$ is the
number of roles, and $m$ is the maximum number of clients taken by any
function of $\cC$. On the other hand, the number of implicit classes is
determined by the PTG itself. In general, there is no bound on the number of
implicit participants, and it is up to a PTG to provide an appropriate
abstraction (i.e., $L$ in \cref{def:PTG}). The label set common to all PTGs is
$\ap(\cC) := \{
\textit{explicit}@i \mid i \in \sqrbrack{n} \} \cup \{ \textit{transient}@i \mid
i \in \sqrbrack{m} \}$.

\begin{definition}[Participation Topology Graph]
    \label{def:PTG}
    Let $L$ be a finite set of implicit classes, $V \subsetneq \nat$ be finite,
    $E \subseteq V \times V$, and $\delta \subseteq E \times \left( \ap( \cC )
    \cup L \right)$. A \emph{PT Graph} for a contract $\cC$ is a tuple $( ( V,
    E, \delta ), \rho, \tau )$, where $( V, E, \delta )$ is a graph labeled by
    $\delta$, $\rho \subseteq \inputs( \cC, \nat ) \times V$, and $\tau
    \subseteq \inputs( \cC, \nat ) \times \nat \times V$, such that for all $N
    \in \nat$ and for all $p \in \inputs( \cC, [N] )$, with $\pt( \cC, N, p ) =
    ( \explicit, \transient, \implicit )$:
    \begin{enumerate}
    \item If $( i, a ) \in \explicit$, then there exists a $( p, u ) \in \rho$
          and $( p, a, v ) \in \tau$ such that $( u, v ) \in E$ and
          $\delta\left( ( u, v ), \textit{explicit}@i \right)$;
    \item If $( i, a ) \in \transient$, then there exists a $( p, u ) \in \rho$
          and $( p, a, v ) \in \tau$ such that $( u, v ) \in E$ and
          $\delta\left( ( u, v ), \textit{transient}@i \right)$;
    \item If $a \in \implicit$, then there exists a $( p, u ) \in \rho$, $( p,
          a, v ) \in \tau$, and $l \in L$ such that $( u, v ) \in E$ and
          $\delta\left( (u, v), l \right)$.
    \end{enumerate}
\end{definition}

In \cref{def:PTG}, $\tau$ and $\rho$ map actions and users to vertices,
respectively. An edge between an action and a user indicates the potential for
participation. The labels describe the potential participation classes. As an
example, \cref{Fig:Topology:PTG} is a PTG for \cref{Fig:Auction}, where all
actions map to $\mathit{sc}$, the zero-account maps to vertex $0$, the smart
contract account maps to vertex $1$, and all other users map to $\star$. The
two implicit classes have the label $\textit{implicit}@0$ and
$\textit{implicit}@1$, respectively.

\begin{theorem}
    Let $\cC$ be a contract with a PTG $( G, \rho, \tau )$, $G = ( V, E, \delta
    )$, and $\delta \subseteq E \times \left( \ap(\cC) \cup L \right)$. Then,
    for all $N \in \nat$ and all $p \in \inputs( \cC,[N] )$, $\pt( \cC, N, p ) =
    ( \explicit, \transient, \implicit )$ is over-approximated by $( G, \rho,
    \tau )$ as follows:
    \begin{enumerate}
    \item If $\explicit( i, a )$, then $\exists ( u, v ) \in E \cdot \rho( p, u
          ) \land \tau( p, a, v ) \land \delta \left( (u, v),
          \textit{explicit}@i \right)$;
    \item If $\transient( i, a )$, then $\exists ( u, v ) \in E \cdot \rho( p, u
          ) \land \tau( p, a, v ) \land \delta \left( (u, v),
          \textit{transient}@i \right)$;
    \item If $\implicit( a )$, then $\exists ( u, v ) \in E \cdot \exists l \in
          L \cdot \rho( p, u ) \land \tau( p, a, v ) \land \delta \left( (u, v),
          l \right)$.
    \end{enumerate}
\end{theorem}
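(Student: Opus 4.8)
The plan is to observe that, up to a change of notation, the statement is precisely the over-approximation property that \cref{def:PTG} already builds into the definition of a PTG, so the proof is little more than unfolding that definition. First I would fix the notational dictionary: $\explicit(i,a)$ abbreviates $(i,a)\in\explicit$, similarly $\transient(i,a)$ and $\implicit(a)$, and $\rho(p,u)$, $\tau(p,a,v)$, $\delta((u,v),\ell)$ abbreviate membership in the relations $\rho$, $\tau$, $\delta$. Under this reading, clauses~(1)--(3) of the statement are literally clauses~(1)--(3) of \cref{def:PTG}. Then I would fix an arbitrary $N\in\nat$ and $p\in\inputs(\cC,[N])$, set $\pt(\cC,N,p)=(\explicit,\transient,\implicit)$, and dispatch the three cases by invoking the corresponding defining clause of the hypothesised PTG $(G,\rho,\tau)$: each yields the required $(u,v)\in E$ with the right label, and in the implicit case the witnessing $\ell\in L$. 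No residual case analysis remains.

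Because the theorem follows directly from the definition, the part of this development that actually needs work is the adjacent claim that such a PTG exists for every MicroSol contract --- that \code{PTGBuilder} returns one. For that I would (i) describe the output graph: one action vertex per transaction, one user vertex per syntactic ``address class'' (each literal address occurring in $\cC$, a vertex for each role/client parameter, and a catch-all vertex $\star$ for the rest), with $\rho$ and $\tau$ mapping each action and each address to its class, and $\delta$ labelling an edge by $\textit{explicit}@i$, $\textit{transient}@i$, or an implicit label according to which syntactic occurrences in the transaction body could let that class influence or be influenced by the transaction; (ii) note termination, which is trivial since the construction is a single pass over the finitely many statements of $\cC$; and (iii) verify conditions~(1)--(3) of \cref{def:PTG} by induction on the structure of a MicroSol statement, showing that any semantic witness $(s,\vu,\vcv)$ or $(s,\vu)$ exhibiting participation of an address $a$ (in the sense of \cref{Def:Participation}) must arise from one of the syntactic comparisons or mapping accesses the builder recorded, and hence contributes the demanded edge and label.

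The main obstacle is step~(iii): showing the static analysis is \emph{complete} for participation, i.e.\ that every concrete witness of influence is explained by a syntactic cause the builder accounted for. The delicate direction is ``influences'': besides the obvious \code{bids[msg.sender]}-style accesses, a user can participate because its address is compared --- transitively, through a chain of assignments --- to a role, to a literal, or to another participating address, and the fourth condition in the definition of influence (persistence of the differing effect under the perturbed user) has to be tracked through those chains. So the argument hinges on maintaining, for every address-typed expression, the finite set of address classes its value may lie in, and checking that reads and writes of mappings only ever touch those classes; the ``influenced-by'' direction (writes that persist into the final state) is comparatively routine.
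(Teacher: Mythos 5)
Your first paragraph is exactly right: the theorem is, up to notation, a restatement of the three defining clauses of \cref{def:PTG}, and the paper accordingly gives no separate proof---unfolding the definition for an arbitrary $N$ and $p$ is the entire argument. Your subsequent discussion of why \code{PTGBuilder} actually produces a valid PTG correctly identifies where the nontrivial work in this section lives, but that is a separate claim from the statement at hand and is not needed here.
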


For any PT, there are many over-approximating PTGs. The weakest PTG joins every
user to every action using all possible labels and a single implicit class.
\cref{Fig:Topology:PTG}, shows a simple, yet stronger, PTG for
\cref{Fig:Auction}. First, note that there are two implicit participants,
identified by addresses 0 and 1, with labels $\textit{implicit}@0$ and
$\textit{implicit}@1$, respectively. Next, observe that any arbitrary user can
become the manager. Finally, the distinctions between actions are ignored. Thus,
there are three user vertices, two which are mapped to the zero-account and
smart contract account, and another mapped to all other users. Such a PTG
is constructed automatically using an algorithm named \code{PTGBuilder}.

\code{PTGBuilder} takes a contract $\cC$ and returns a PTG. The implicit classes
are $L \gets \{ \textit{implicit}@a \mid a \in \nat \}$, where
$\textit{implicit}@a$ signifies implicit communication with address $a$. PTG
construction is reduced to taint analysis~\cite{Kildall1973}. Input address variables, state
address variables, and literal addresses are tainted sources. Sinks are memory
writes, comparison expressions, and mapping accesses. \code{PTGBuilder} computes
$( \mathit{Args}, \mathit{Roles}, \mathit{Lits} )$, where
\begin{inparaenum}[(1)]
\item $\mathit{Args}$ is the set of indices of input variables that propagate to
      a sink;
\item $\mathit{Roles}$ is the set of indices of state variables that propagate
      to a sink;
\item $\mathit{Lits}$ is the set of literal addresses that propagate to a sink.
\end{inparaenum}
Finally, a PTG is constructed as $( G, \rho, \tau )$, where $G = ( V, E, \delta
)$, $\rho \subseteq \inputs( \cC, \nat ) \times V$, $\tau \subseteq \inputs(
\cC, \nat ) \times \nat \times V$, $\textit{sc}$, and $\star$ are unique
vertices:
{\small \begin{enumerate}
\item $V \gets \mathit{Lits} \cup \{ \textit{sc}, \star \} $ and $E \gets \{ (
      \textit{sc}, v ) \mid v \in V \backslash \{ \textit{sc} \} \}$;
\item $\delta \gets \{(e, \textit{explicit}@i) \mid e \in E, i \in \mathit{Args}\}
       \cup
       \{(e, \textit{transient}@i) \mid e \in E, i \in \mathit{Roles} \}
       \cup
       \{((\textit{sc}, a), \textit{transient}@a) \mid a \in \textit{Lits}\}$;
\item $\rho \gets \{ ( p, \textit{sc} ) \mid p \in \inputs( \cC,\nat ) \}$;
\item $\tau \gets \{ ( p, a, \star ) \mid p \in \inputs(\cC,\nat), a \in \nat
      \backslash \mathit{Lits} \} \cup \{ ( p, a, a ) \mid p \in \inputs( \cC,
      \nat), a \in \mathit{Lits} \}$.
\end{enumerate}}

\code{PTGBuilder} 
formalizes the intuition of \cref{Fig:Topology:PTG}. Rule~1 ensures that every
literal address has a vertex, and that all user vertices connect to
$\textit{sc}$.
Rule~2 over-approximates explicit, transient, and implicit labels.
The first set states that if an input address is never used, then the client is not an explicit participant.
This statement is self-evident, and over-approximates explicit participation.
The second and third set make similar claims for roles and literal addresses
Rules~3~and~4 define $\rho$ and
$\tau$ as expected. Note that in MicroSol, implicit participation stems from
literal addresses, since addresses do not support arithmetic operations, and since
numeric expressions cannot be cast to addresses.

By re-framing smart contracts with rendezvous synchronization, each transaction
is re-imagined as a communication between several users. Their communication
patterns are captured by the corresponding PT. A PTG over-approximates PTs of
all transactions, and is automatically constructed using \code{PTGBuilder}. This
is crucial for PCMC as it provides an upper bound on the number of equivalence
classes, and the users in each equivalence class (see
\appendixcite{Appendix:Uniformity}).

\section{Local Reasoning in Smart Contracts}
\label{sec:locality}
In this section, we present a proof rule for the parameterized safety of
MicroSol programs. Our proof rule extends the existing theory of PCMC. The
section is structured as follows. \cref{Sec:Sub:Properties} introduces syntactic
restrictions, for properties and interference invariants, that expose address
dependencies. \cref{Sec:Sub:Localization}, defines local bundle reductions, that
reduce parameterized smart contract models to finite-state models. We show
that for the correct choice of local bundle reduction, the safety of the
finite-state model implies the safety of the parameterized model.

\subsection{Guarded Properties and Split Invariants}
\label{Sec:Sub:Properties}
Universal properties and interference invariants might depend on user addresses.
However, PCMC requires explicit address dependencies. This is because address
dependencies allow predicates to distinguish subsets of users. To resolve this,
we introduce two syntactic forms that make address dependencies explicit:
guarded universal safety properties and split interference invariants. We build
both forms from so called \emph{address-oblivious} predicates that do not depend
on user addresses.

For any smart contract $\cC$ and any address space $\cA$, a pair of user
configurations, $\vu, \vcv \in \user( \cC, \cA )^k$, are \emph{$k$-address
similar} if $\forall i \in [k] \cdot \rec( \vu_i ) = \rec( \vcv_i )$. A
predicate $\xi \subseteq \control( \cC, \cA ) \times \user( \cC, \cA )^k$ is
address-oblivious if, for every choice of $s \in \control( \cC, \cA )$, and
every pair of $k$-address similar configurations, $\vu$ and $\vcv$, $\xi( s, \vu
) \iff \xi( s, \vcv )$. \prop{1} and \prop{2} in \cref{Sec:Intro} are
address-oblivious.

A \emph{guarded $k$-universal safety property} is built from a single $k$-user
address-oblivious predicate. The predicate is guarded by constraints over its
$k$ user addresses. Each constraint compares a single user's address to either
a literal address or a role. This notion is formalized by
\cref{Def:GuardedSafety}, and illustrated in \cref{Ex:GuardedSafety}.

\begin{definition}[Guarded Universal Safety]
    \label{Def:GuardedSafety}
    For $k \in \nat$, a \emph{guarded $k$-universal
    safety property} is a $k$-universal safety property $\varphi$, given by a
    tuple $( L, R, \xi )$, where $L \subsetneq \nat \times [k]$ is finite, $R
    \subsetneq \nat \times [k]$ is finite, and $\xi$ is an address-oblivious
    $k$-user predicate, such that:
    \begin{align*}
        \varphi \left( s, \vu \right)
        &:= \left(
            \left( \Land_{( a, i ) \in L} a = \id( \vu_{i} ) \right)
            \land
            \left( \Land_{( i, j ) \in R} \role( s, i ) = \id( \vu_{j} ) \right)
        \right) \implies \xi( s, \vu )
    \end{align*}
    Note that $\cA_L \gets \{ a \mid ( a, i ) \in L \}$ and $\cA_R \gets \{ i
    \mid ( i, j ) \in R \}$ and  define the \emph{literal} and \emph{role
    guards} for $\varphi$.
\end{definition}

\begin{example}
    \label{Ex:GuardedSafety}
    Consider the claim that in \code{Auction} of \cref{Fig:Auction}, the
    zero-account cannot have an active bid. This claim is stated as \prop{3}:
    \emph{For each user process $\vu$, if $\id( \vu_0 ) = 0$, then $\rec( \vu_0
    )_0 = 0$}. That is, \prop{3}  is a guarded $1$-universal safety property 
    $\varphi_1(s, \vu) \gets \left( 0 =  \id( \vu_0 ) \right) \implies \left(
    \rec( \vu_0 )_0 = 0 \right)$. Following \cref{Def:GuardedSafety},
    $\varphi_1$ is determined by $( L_1, \varnothing, \xi_1 )$, where $L_1 = \{
    ( 0, 0 ) \}$ and $\xi_1( s, \vu ) \gets \rec( \vu_0 )_0 = 0$. The second set
    is $\varnothing$ as there are no role constraints in \prop{3}. If a state
    $(s, \vu)$ satisfies $\varphi_1$, then $\forall \{ i \} \subseteq [N] \cdot
    \varphi_1( s, ( \vu_i ) )$. Note that $\vu$ is a singleton vector, and that
    $\varphi_1$ has $1$ literal guard, given by $\{ 0 \}$. \qed
\end{example}

The syntax of a \emph{split interference invariant} is similar to a guarded
safety property. The invariant is constructed from a list of address-oblivious
predicates, each guarded by a single constraint. The final predicate is guarded
by the negation of all other constraints. Intuitively, each address-oblivious
predicate summarizes the class of users that satisfy its guard. The split
interference invariant is the conjunction of all (guarded predicate) clauses.
We proceed with the formal definition in \cref{Def:CompInvar} and a practical
illustration in \cref{Ex:SplitCompInvar}.

\begin{definition}[Split Interference Invariant]
    \label{Def:CompInvar}
    A \emph{split interference invariant} is an
    interference invariant $\theta$, given by a tuple $( \cA_L, \cA_R, \vec{\zeta},
    \vec{\mu}, \xi )$, where $\cA_L = \{ l_0, \ldots, l_{m-1} \} \subsetneq \nat$ is
    finite, $\cA_R = \{ r_0, \ldots, r_{n-1} \} \subsetneq \nat$ is finite,
    $\vec{\zeta}$ is a list of $m$ address-oblivious $1$-user predicates,
    $\vec{\mu}$ is a list of $n$ address-oblivious $1$-user predicates, and
    $\xi$ is an address-oblivious $1$-user predicate, such that:
    \begin{align*}
        \psi_{\mathrm{Lits}}(s, \vu)
            &:= \left( \Land_{i = 0}^{m-1} \id(\vu_0) = l_i \right)
                \implies \vec{\zeta}_i(s, \vu) \\
        \psi_{\mathrm{Roles}}(s, \vu)
            &:= \left( \Land_{i = 0}^{n-1} \id(\vu_0) = \role(s, r_i) \right)
                \implies \vec{\mu}_i(s, \vu) \\
        \psi_{\mathrm{Else}}(s, \vu)
            &:= \left(\left( \Land_{i = 0}^{m-1} \id(\vu_0) \ne l_i \right)
                \land \left( \Land_{i = 0}^{n-1} \id(\vu_0) \ne \role(s, r_i) \right)\right)
                \implies \xi(s, \vu) \\
        \theta(s, \vu)
            &:= \psi_{\mathrm{Roles}}(s, \vu)
                \land \psi_{\mathrm{Lits}}(s, \vu) 
                \land \psi_{\mathrm{Else}}(s, \vu)
    \end{align*}
    Note that $\cA_L$ and $\cA_R$ define \emph{literal} and \emph{role guards}
    of $\theta$, and that $|\vu| = 1$.
\end{definition}

\begin{example}
    \label{Ex:SplitCompInvar}
    To establish $\varphi_1$ from \cref{Ex:GuardedSafety}, we require an
    adequate interference invariant such as \prop{4}: \emph{The zero-account
    never has an active bid, while all other users can have active bids}. That
    is, \prop{4} is a split interference invariant:
    {\small \begin{equation*}
        \theta_1(s, \vu) \gets\;
            \left( \id(\vu_0) = 0 \;\;\implies\; (\rec(\vu_0))_0 = 0 \right)
            \;\land\;
            \left( \id(\vu_0) \ne 0 \;\;\implies\; (\rec(\vu_0))_0 \ge 0 \right)
    \end{equation*}}
    Following \cref{Def:CompInvar}, $\theta_1$ is determined by $\mathit{Inv} = (
    \cA_L, \varnothing, ( \xi_1 ), \varnothing, \xi_2 )$, where $\cA_L = \{ 0 \}$,
    $\xi_1$ is defined in \cref{Ex:GuardedSafety}, and $\xi_2( s, \vu ) \gets
    \rec( \vu_0 )_0 \ge 0$. The two instances of $\varnothing$ in $\mathit{Inv}$
    correspond to the lack of role constraints in $\theta_1$. If $\mathit{Inv}$
    is related back to \cref{Def:CompInvar}, then $\psi_{\mathrm{Roles}}( s, \vu
    ) \gets \top$, $\psi_{\mathrm{Lits}}( s, \vu ) \gets \left( \id(\vu_0) = 0
    \right) \implies \left( \rec( \vu_0 )_0 = 0 \right)$, and
    $\psi_{\mathrm{Else}}( s, \vu ) \gets \left( \id( \vu_0 ) \ne 0 \right)
    \implies \left( \rec( \vu_0)_0 \ge 0 \right)$. \qed
\end{example}

\subsection{Localizing a Smart Contract Bundle}
\label{Sec:Sub:Localization}
A local bundle is a finite-state abstraction of a smart contract bundle. This
abstraction reduces smart contract PCMC to software model checking. At a high
level, each local bundle is a non-deterministic LTS and is constructed from
three components: a smart contract, a candidate interference invariant, and a
neighbourhood. The term \emph{candidate interference invariant} describes any
predicate with the syntax of an interference invariant, regardless of its
semantic interpretation. Sets of addresses are used to identify representatives
in a neighbourhood.

Let $\cA$ be an $N$-user neighbourhood and $\theta_U$ be a candidate
interference invariant. The local bundle corresponding to $\cA$ and $\theta_U$
is defined using a special relation called an \emph{$N$-user interference
relation}. The $N$-user interference relation (for $\theta_U$) sends an $N$-user
smart contract state to the set of all $N$-user smart contract states that are
reachable under the interference of $\theta_U$. A state is reachable under
the interference of $\theta_U$ if the control state is unchanged, each address
is unchanged, and all user data satisfies $\theta_U$. For example, lines \ref{line:harness-instr-start}--\ref{line:harness-instr-end}
in \cref{Fig:Harness} apply a $4$-user interference relation to the states of
\code{Auction}. Note that if the interference relation for $\theta_U$ fails to
relate $(s, \vu)$ to itself, then $(s, \vu)$ violates $\theta_U$.
\begin{definition}[Interference Relation]
    \label{Def:InterfRel}
    Let $N \in \nat$, $\cC$ be a contract, $S = \control( \cC, \nat ) \times
    \user( \cC, \nat )^N$, and $\theta_U$ be a split candidate interference
    invariant. The $N$-user interference relation for $\theta_U$ is the relation
    $g: S \rightarrow 2^S$ such that $g( c, \vu ) \gets \{ ( c, \vcv ) \in S
    \mid \forall i \in [N] \cdot \id( \vu_i ) = \id( \vcv_i ) \land \theta_U( s,
    \vcv_i ) \}$.
\end{definition}

Each state of the \emph{local bundle} for $\cA$ and $\theta_U$ is a tuple $( s,
\vu )$, where $s$ is a control state and $\vu$ is an $N$-user configuration. The
$N$ users in the local bundle correspond to the $N$ representatives in $\cA$,
and therefore, the address space of the local bundle can be non-consecutive. The
transition relation of the local bundle is defined in terms of the (global)
transaction function $f$. First, the transition relation applies $f$. If the
application of $f$ is closed under $\theta_U$, then the interference relation is
applied. Intuitively, $\theta_U$ defines a safe envelop under which the
interference relation is compositional.
\begin{definition}[Local Bundle]
    \label{Def:LocalBundle}
    Let $\cC$ be a contract, $\cA = \{ a_0, \ldots, a_{N-1} \}\subseteq \nat$ be
    an $N$-user neighbourhood, $\theta_U$ be a candidate split interference
    invariant, and $g$ be the $N$-user interference relation for $\theta_U$. A
    \emph{local bundle} is an LTS $\local(\cC, \cA, \theta_U) \gets ( S, P,
    \hat{f}, s_0 )$, such that $S \gets \control( \cC, \cA ) \times \user( \cC,
    \cA )^N$, $P \gets \inputs( \cC, \cA )$, $s_0 \gets ( c_0, \vu )$, $c_0 \gets
    ( \vec{0}, \vec{0} )$, $\forall i \in [N] \cdot \id( \vu_i ) = a_i \land
    \rec( \vu_i ) = \vec{0}$, and $\hat{f}$ is defined with respect to $\cM: \cA
    \rightarrow [N]$, $\cM(a_i) = i$, such that:
    {\begin{equation*}
        \hat{f}((s, \vu), p) \gets \begin{cases}
            g(s', \vu')
                & \text{if } (s', \vu') = \bbrackfn{\cC}_{\cM}((s, \vu), p)
                  \land (s', \vu') \in g(s', \vu') \\
            \bbrackfn{\cC}_{\cM}((s, \vu), p) & \text{otherwise}
        \end{cases}
    \end{equation*}}
\end{definition}

\begin{example}
    \label{Ex:LocalBundle}
    We briefly illustrate the transition relation of \cref{Def:LocalBundle}
    using \code{Auction} of \cref{Fig:Auction}. Let $\cA_1 = \{ 0, 1, 2, 3 \}$
    be a neighbourhood, $\theta_1$ be as in \cref{Ex:SplitCompInvar}, $g$ be the
    $4$-user interference relation for $\theta_1$, and $( S, P, \hat{f}, s_0 ) =
    \local( \cC, \cA_1, \theta_1 )$. Consider applying $\hat{f}$ to $( s, \vu )
    \in S$ with action $p \in P$, such that $s = \{ \mathsol{manager} \mapsto 2;
    \mathsol{leadingBid} \mapsto 0 \}$, $\forall i \in [4] \cdot \rec( \vu_i ) =
    0$, and $p$ is a bid of $10$ from a sender at address $3$.
    
    By definition, if $( s', \vcv ) = f( s, \vu, p )$, then the leading bid is
    now $10$, and the bid of the sender is also $10$, since the sender of $p$ was
    not the manager and the leading bid was less than $10$. Clearly $( s', \vcv )
    \in g( s', \vcv )$, and therefore, $g( s', \vcv ) = \hat{f}\left( ( s, \vu ), p
    \right)$. A successor state is then selected, as depicted in
    \cref{Fig:Local:Sub:Bundle}. This is done by first assigning an arbitrary
    bid to each representative, and then requiring that each bid satisfies
    $\theta_1$ relative to $s'$. In \cref{Fig:Local:Sub:Bundle}, a network is
    selected in which $\forall i \in [4] \cdot \id(\vcv_i) = i$. As depicted in
    \cref{Fig:Local:Sub:Bundle}, $\theta_1$ stipulates that the zero-account
    must satisfy $\xi_1$ and that all other users must satisfy $\xi_2$.

    In \cref{Fig:Local:Sub:AddrSet}, a satisfying bid is assigned to each user. The
    choice for $d_0$ was fixed since $\xi_1( s, \vcv_0 )$ entails $d_0 = 0$. For 
    $d_1$ to $d_3$, any non-negative value could have been selected. After the
    transaction is executed, $\rec( \vu'_0 )_0 = 0$, $\rec( \vu'_1 )_0 = 1$,
    $\rec( \vu'_2 )_0 = 2$, $\rec( \vu'_3 )_0 = 3$, and $s' = \{
    \mathsol{manager} \mapsto 2; \mathsol{leadingBid} \mapsto 10 \}$. Then $(
    s', \vu' ) \in \hat{f}( s, \vu )$, as desired. Note that $( s', \vu' )$ is
    not reachable in $\lts( \cC, 4 )$. \qed
\end{example}

\begin{figure}[t]
    \centering
    \begin{subfigure}{.49\linewidth}
        \centering
        \includegraphics[scale=0.65]{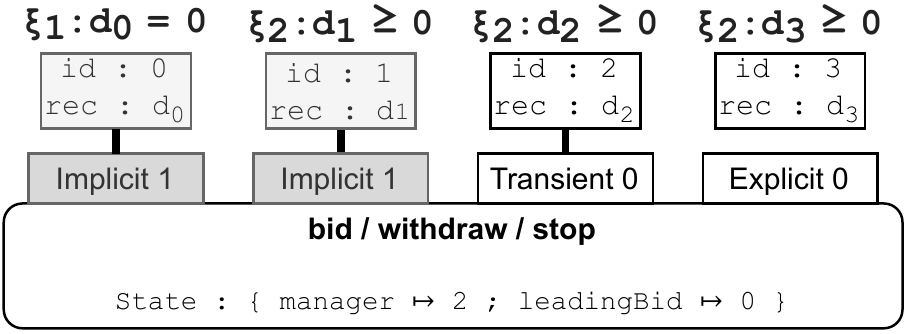}
        \caption{A local $4$-user configuration.}
        \label{Fig:Local:Sub:Bundle}
    \end{subfigure}
    \hfill
    \begin{subfigure}{.49\linewidth}
        \centering
        \includegraphics[scale=0.65]{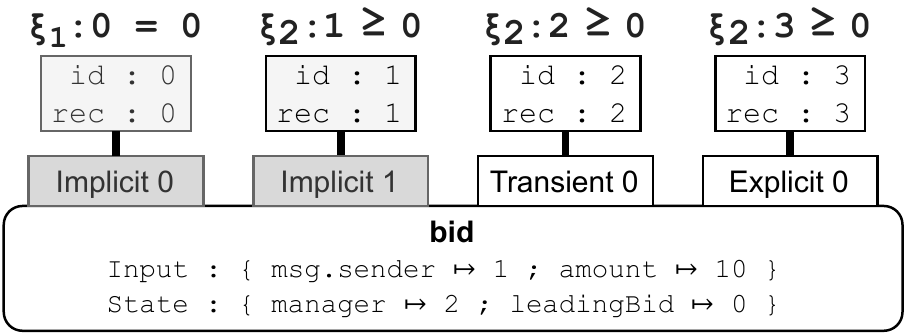}
        \caption{The saturating property of $\cA_1$.}
        \label{Fig:Local:Sub:AddrSet}
    \end{subfigure}

    \hfill
    \RawCaption{\caption{The local bundle for \code{Auction} in
                         \cref{Fig:Auction}, as defined by $\cA_1$ and
                         $\theta_1$ in \cref{Ex:LocalBundle}.}
                \label{Fig:Local}}
\end{figure}

\cref{Ex:LocalBundle} motivates an important result for local bundles. Observe
that $( s', \vu' ) \models \theta_1$. This is not by chance. First, by the
compositionality of $\theta_1$, all user configurations reached by $\local( \cC,
\cA_1, \theta_1 )$ must satisfy $\theta_U$. Second, and far less obviously, by
choice of $\cA_1$, if all reachable user configurations satisfy $\theta_1$,
then $\theta_1$ must be compositional. The proof of this result relies on a
saturating property of $\cA_1$.

A neighbourhood $\cA$ is \emph{saturating} if it contains representatives from
each participation class of a PTG, and for all role guards ($\cA_R \subsetneq
\nat$) and literal guards ($\cA_L \subseteq \nat$) of interest. Intuitively,
each participation class over-approximates an equivalence class of $\cC$. The
number of representatives is determined by the equivalence class. In the case of
\code{PTGBuilder}, a saturating neighbourhood contains one address for each
participation class. For an implicit class, such as $\textit{implicit}@x$, $x$
is literal and must appear in the neighbourhood. All other addresses are
selected arbitrarily. The saturating property of $\cA_1$ is depicted in
\cref{Fig:Local:Sub:AddrSet} by the correspondence between users and
participation classes ($\cA_R = \varnothing$, $\cA_L = \{ 0 \}$).

\newcommand{\Aexp}{\cA_{\mathrm{Exp}}}
\newcommand{\Atrans}{\cA_{\mathrm{Trans}}}
\newcommand{\Aimpl}{\cA_{\mathrm{Impl}}}
\begin{definition}[Saturating Neighbourhood]
    Let $\cA_R, \cA_L \subseteq \nat$, $\cC$ be a contract, $( G, \rho, \tau )$
    be the \code{PTGBuilder} PTG of $\cC$, and $G = ( V, E, \delta )$ such that
    $\cA_R$ and $\cA_L$ are finite. A \emph{saturating neighbourhood} for $(
    \cA_R, \cA_L, ( G, \rho, \tau ) )$ is a set $\Aexp \cup \Atrans \cup \Aimpl$
    s.t. $\Aexp, \Atrans, \Aimpl \subseteq \nat$ are pairwise disjoint and:
    {\begin{enumerate}
    \item $|\Aexp| = |\{ i \in \nat \mid \exists e \in E \cdot \delta \left( e,
          \textit{explicit}@i \right) \}|$,
    \item $|\Atrans| = |\{ i \in \nat \mid \exists e \in E \cdot \delta \left(
          e, \textit{transient}@i \right) \} \cup \cA_R|$,
    \item $\Aimpl = \{ x \in \nat \mid \exists e \in E \cdot \delta \left( e,
          \textit{implicit}@x \right) \} \cup \cA_L$.
    \end{enumerate}}
\end{definition}

A saturating neighbourhood can be used to reduce compositionality and $k$-safety
proofs to the safety of local bundles. We start with compositionality. Consider
a local bundle with a neighbourhood $\cA^{+}$, where $\cA^{+}$ contains a
saturating neighbourhood, the guards of $\theta_U$, and some other address $a$.
The neighbourhood $\cA^{+}$ contains a representative for: each participation
class; each role and literal user distinguished by $\theta_U$; an arbitrary user
under interference (i.e., $a$).
We first claim that if $\theta_U$ is compositional, then a local bundle
constructed from $\theta_U$ must be safe with respect to $\theta_U$ (as in
\cref{Ex:LocalBundle}). The first claim follows by induction. By
\textbf{Initialization} (\cref{Sec:Background}), the initial users satisfy
$\theta_U$. For the inductive step, assume that all users satisfy $\theta_U$ and
apply $\hat{f}_p$. The users that participate in $\hat{f}_p$ maintain $\theta_U$
by \textbf{Consecution} (\cref{Sec:Background}). The users that do not
participate also maintain $\theta_U$ by \textbf{Non-Interference}
(\cref{Sec:Background}). By induction, the first claim is true.
We also claim that for a sufficiently large neighbourhood---say $\cA^{+}$---the
converse is also true. Intuitively, $\cA^{+}$ is large enough to represent each
equivalence class imposed by both the smart contract and $\theta_U$, along with
an arbitrary user under interference. Our key insight is that the reachable
control states of the local bundle form an inductive invariant $\theta_C$. If
the local bundle is safe, then the interference relation is applied after each
transition, and, therefore, the local bundle considers every pair of control and
user states $(c, u)$ such that $c \in \theta_C$ and $(c, u) \in \theta_U$.
Therefore, the safety of the local bundle implies \textbf{Initialization},
\textbf{Consecution}, and \textbf{Non-Interference}.
This discussion justifies \cref{Thm:CompCheck}.

\begin{theorem}
    \label{Thm:CompCheck}
    Let $\cC$ be a contract, $G$ be a PTG for $\cC$, $\theta_U$
    be a candidate split interference invariant with role guards $\cA_R$ and
    literal guards $\cA_L$, $\cA$ be a saturating neighbourhood for $( \cA_R,
    \cA_L, G )$, $a \in \nat \backslash \cA$, and $\cA^{+} = \{ a \} \cup \cA$.
    Then, $\local(\cC, \cA^{+}, \theta_U) \models \theta_U$ if and only if
    $\theta_U$ is an interference invariant for $\cC$.
\end{theorem}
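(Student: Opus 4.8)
The statement is an equivalence, and its two directions are of very different character. \textbf{The easy direction} is that $\theta_U$ being an interference invariant implies $\local(\cC, \cA^{+}, \theta_U) \models \theta_U$. Here I would fix an inductive invariant $\theta_C$ witnessing the four PCMC conditions and argue, by induction on trace length, that every reachable state $(s, \vu)$ of the local bundle has $s \in \theta_C$ and $\theta_U(s, \vu_i)$ for all $i$. The base case uses \textbf{Initialization} together with address-obliviousness of the consequents of $\theta_U$ (the initial mapping data is $\vec 0$ at the all-zero control state $c_0$). For the inductive step, write $(s', \vu') = \bbrackfn{\cC}_{\cM}((s, \vu), p)$ and decompose the global transaction into its SCUN constituents --- a synchronized entry step per client, one internal step, and a synchronized update step per participant (see \appendixcite{Appendix:Lts}); chaining \textbf{Consecution 1} on acting users, \textbf{Consecution 2} on the internal step, and \textbf{Non-Interference} on the remaining users yields $s' \in \theta_C$ and $\theta_U(s', \vu'_i)$ for all $i$. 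Since $\bbrackfn{\cC}_{\cM}$ preserves addresses, $(s', \vu') \in g(s', \vu')$, so the local bundle takes the interference branch, $\hat f((s, \vu), p) = g(s', \vu')$, and every element of this set again satisfies $\theta_U$ at $s'$. Note this direction uses neither saturation of $\cA$ nor the spare address $a$.

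\textbf{The hard direction} assumes $\local(\cC, \cA^{+}, \theta_U) \models \theta_U$ and must produce an inductive invariant $\theta_C$ satisfying the four conditions against $\theta_U$. The first step is a closure observation: because the local bundle is safe, the interference branch of $\hat f$ fires on every transition out of a reachable state (otherwise the successor would be the lone state $\bbrackfn{\cC}_{\cM}((s,\vu),p)$, which by the guard of that branch violates $\theta_U$). Hence the reachable set $\mathcal R$ is closed under $g$: from a state reached by at least one transition, the user data may be replaced by any data (with the same addresses) satisfying $\theta_U$ at the current control state, and the result is still reachable. I would then take $\theta_C$ to be the set of control states occurring in $\mathcal R$, closed under the address renamings permitted by uniformity --- those fixing the literal guards $\cA_L$ and respecting the role guards $\cA_R$ (see \appendixcite{Appendix:Uniformity}). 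Then $c_0 \in \theta_C$ and $\theta_C$ is inductive by construction, while $(c_0, u_0) \in \theta_U$ follows from safety at $s_0$; this gives \textbf{Initialization}.

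For the remaining three conditions I would, in each case, realize the single SCUN step they mention inside the local bundle. Given $c \in \theta_C$ with some witnessing $(\hat c, \hat\vu) \in \mathcal R$ (modulo renaming), and users $u$ (and, for \textbf{Non-Interference}, an observer $v \ne u$) with $(c, u), (c, v) \in \theta_U$: since only PT participants can affect or be affected by a transaction (the over-approximation theorem above) and $\cA$ is saturating, every client of the action $p$, every role, and every literal address the transaction inspects, together with $u$ and $v$, can be assigned to distinct representative addresses of $\cA^{+}$ --- the spare address $a$ furnishing the ``arbitrary user under interference'' needed by \textbf{Non-Interference} --- while, by the $g$-closure of $\mathcal R$, the data at those addresses can be set to match $u$ and $v$ within $\mathcal R$. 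Applying the local-bundle transition for $p$ and invoking safety gives a reachable successor whose control lies in $\theta_C$ and all of whose users satisfy $\theta_U$; reading off the representative for $u$ (and using that a $T_S$ step on $u$ leaves $v$'s data untouched) discharges the condition after undoing the renaming. The one genuinely delicate point besides the renaming bookkeeping is the initial control state $c_0$, for which $\mathcal R$ need not contain all $g$-images; this is handled directly from \textbf{Initialization} and the first-transition analysis above.

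\textbf{The main obstacle} is exactly this transport between arbitrary, unbounded-address SCUN configurations and the fixed, finite, possibly non-consecutive address space $\cA^{+}$: one must stitch together (i) the PTG over-approximation, so that only participants are relevant; (ii) saturation of $\cA$, so that $\cA^{+}$ carries a representative for every participation class, role guard, and literal guard, plus the spare $a$; and (iii) the uniformity/symmetry lemma, to move states and transitions back and forth modulo address renaming. Isolating the $g$-closure of $\mathcal R$ cleanly --- since it is precisely what lets the interfering and acting users be chosen freely, subject only to $\theta_U$ --- is the crux that makes the simulation go through.
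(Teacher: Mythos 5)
Your proposal is correct and takes essentially the same route as the paper: the paper justifies \cref{Thm:CompCheck} only by the informal discussion immediately preceding it, whose two claims---safety of the local bundle by induction using \textbf{Initialization}, \textbf{Consecution}, and \textbf{Non-Interference}, and conversely taking $\theta_C$ to be the reachable control states and observing that safety forces the interference relation to fire after every transition, so the bundle explores every pair $(c,u)$ with $c \in \theta_C$ and $(c,u) \in \theta_U$---are exactly the two directions you elaborate. Your added detail on $g$-closure of the reachable set, saturation, and address renaming fills in what the paper defers to its uniformity appendix.
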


Next, we present our main result: a sound proof rule for $k$-universal safety.
As in \cref{Thm:CompCheck}, \cref{Thm:SafetyCheck} uses a saturating neighbourhood
$\cA^{+}$. This proof rule proves inductiveness, rather than compositionality,
so $\cA^{+}$ does not require an arbitrary user under interference. However, a
$k$-universal property can distinguish between $k$ users at once. Thus,
$\cA^{+}$ must have at least $k$ arbitrary representatives.

\begin{theorem}
    \label{Thm:SafetyCheck}
    Let $\varphi$ be a $k$-universal safety property with role guards $\cA_R$
    and literal guards $\cA_L$, $\cC$ be a contract,
    $\theta_U$ be an interference invariant for $\cC$, $G$ be a PTG for $\cC$,
    $\cA = \Aexp \cup \Atrans \cup \Aimpl$ be a saturating neighbourhood for $(
    \cA_R, \cA_L, G )$. Define $\cA^{+} \subseteq \nat$ such that $\cA \subseteq
    \cA^{+}$ and $|\cA^{+}| = |\cA| + \max( 0, k - |\Aexp| )$. If $\local( \cC,
    \cA^{+}, \theta_U ) \models \varphi$, then $\forall N \in \nat \cdot \lts(
    \cC, N ) \models \varphi$.
\end{theorem}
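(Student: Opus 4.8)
The plan is to reduce parameterized safety of $\lts(\cC, N)$ to the safety of the finite local bundle $\local(\cC, \cA^+, \theta_U)$ by assembling a compositional invariant in the sense of PCMC and then invoking the PCMC theorem from Section~2. Since $\theta_U$ is assumed to be an interference invariant, I already have the user-side predicate; what I still need is an inductive invariant $\theta_C$ on control states, and then I must check the three PCMC side-conditions (\textbf{Initialization}, \textbf{Consecution}, \textbf{Non-Interference}) together with the final entailment $\forall c \in \theta_C \cdot \forall \{(c,u_1),\dots,(c,u_k)\} \subseteq \theta_U \cdot \varphi(c, u_1, \dots, u_k)$. The key idea, foreshadowed in the paragraph before \cref{Thm:CompCheck}, is to take $\theta_C$ to be the set of control states reachable in $\local(\cC, \cA^+, \theta_U)$. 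Because the local bundle applies the interference relation $g$ after every transition whenever the successor lies in its own image, a run of the local bundle explores, from each reachable control state $c \in \theta_C$, \emph{every} user configuration whose data satisfies $\theta_U$; this is exactly what lets local-bundle safety witness the PCMC conditions.

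The steps, in order, would be: (1) Define $\theta_C$ as the reachable control states of $\local(\cC, \cA^+, \theta_U)$ and observe $c_0 \in \theta_C$, giving \textbf{Initialization} (the initial user data is $\vec 0$, which satisfies $\theta_U$ since $g(s_0) \ni s_0$, as already noted after \cref{Def:LocalBundle}; if not, the local bundle would be unsafe and there is nothing to prove). (2) For \textbf{Consecution} and \textbf{Non-Interference}: given $c \in \theta_C$ and users $u, v$ with $(c,u),(c,v) \in \theta_U$, I must exhibit a reachable local-bundle state that ``contains'' $u$ and $v$ among its $N = |\cA^+|$ representatives; here the saturating property of $\cA$ is essential. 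Because $\cA$ contains a representative for every explicit, transient, and implicit participation class of the PTG $G$, and for every role guard in $\cA_R$ and literal guard in $\cA_L$, the participating user of any transaction $f_p$ can be mapped to a representative in $\cA^+$ while placing the interfering user $v$ on any of the remaining (arbitrary) representatives — using the saturating-neighbourhood argument from \cref{sec:communication} and \appendixcite{Appendix:Uniformity}. Since the local bundle is safe, the post-state lies back in $g$, so $\theta_U$ holds of all representatives after the step; projecting onto the image of $u$ (resp.\ $v$) yields the PCMC consecution (resp.\ non-interference) clause, and the control post-state is by definition again in $\theta_C$. (3) For the entailment into $\varphi$: $\varphi$ is a guarded $k$-universal property, so it is nontrivial only on $k$-tuples of users whose addresses match the literal guards $\cA_L$ and role guards $\cA_R$; by construction $|\cA^+| = |\cA| + \max(0, k - |\Aexp|)$ provides at least $k$ representatives that can simultaneously realize any such guarded tuple, so a violation of $\varphi$ over $\theta_C \times \theta_U^k$ would appear as an unsafe reachable state of the local bundle, contradicting $\local(\cC,\cA^+,\theta_U) \models \varphi$. (4) Apply the PCMC theorem of Section~2 to conclude $\forall N \cdot \lts(\cC, N) \models \varphi$.

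The main obstacle I anticipate is step~(2), and specifically the claim that a single run of the local bundle over the fixed representative set $\cA^+$ is rich enough to witness the PCMC conditions for \emph{arbitrary} $u, v$ and arbitrary network size $N$. This requires a careful ``mapping'' or simulation lemma: given any $N$-user global state $(s, \vu)$ of $\lts(\cC, N)$ with $s \in \theta_C$ and all data satisfying $\theta_U$, together with a transaction $f_p$ and a designated acting user $\vu_i$ and interfering user $\vu_j$, one must fold the $N$ users down onto the $|\cA^+|$ representatives so that (a) the participation class of $\vu_i$ under $p$ (as certified by the PTG over-approximation \cref{Thm:Topology}) is preserved, (b) $\vu_i$ and $\vu_j$ land on distinct representatives, (c) all literal-address and role comparisons evaluate identically, and (d) the transition function $\bbrackfn{\cC}_{\cM}$ commutes with this folding because the property and invariant are address-oblivious and $f$ preserves addresses. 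Establishing that such a fold always exists — essentially a uniformity argument showing that finitely many representative configurations suffice — is the technical heart of the proof, and it is where the saturating-neighbourhood definition, the PTG correctness theorem, and the address-obliviousness of $\varphi$ and $\theta_U$ all have to be combined. The remaining steps are then largely bookkeeping on top of the PCMC meta-theorem and the finite-state decidability of the local bundle.
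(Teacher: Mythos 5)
Your plan matches the paper's own argument for \cref{Thm:SafetyCheck}: the paper likewise takes $\theta_C$ to be the reachable control states of the local bundle, uses the saturating neighbourhood together with the PTG over-approximation and address-obliviousness to discharge the PCMC side-conditions and the final entailment into $\varphi$, and relegates exactly the folding/simulation step you flag as the technical heart to its uniformity results (reduction of each transaction to a bounded neighbourhood, interchangeability of users under address bijections avoiding implicit participants, and reduction to a finite address set) in \cref{Appendix:Uniformity}. Nothing in your outline diverges from or falls short of what the paper does.
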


\cref{Thm:SafetyCheck} completes \cref{Ex:SplitCompInvar}. Recall $( \varphi_1,
\theta_1, \cA_1 )$ from \cref{Ex:LocalBundle}. Since $\varphi_1$ is
$1$-universal and $\cA_1$ has one explicit representative, it follows that
$\cA^{+} = \cA_1 \cup \varnothing$. Using an SMC, $\local( \cC, \cA_1^{+},
\theta_1 ) \models \varphi_1$ is certified by an inductive strengthening
$\theta_1^{*}$. Then by \cref{Thm:SafetyCheck}, $\cC$ is also safe for $2^{160}$
users. Both the local and global bundle have states exponential in the number of
users. However, the local bundle has $4$ users (a constant fixed by $\cC$),
whereas the global bundle is defined for any number of users. This achieves an
exponential state reduction with respect to the network size. Even more
remarkably, $\theta_1^{*}$ must be the inductive invariant from
\cref{Sec:Background}, as it summarizes the safe control states that are closed
under the interference of $\theta_1$. Therefore, we have achieved an exponential
speedup in verification and have automated the discovery of an inductive
invariant.


\section{Implementation and Evaluation}
\label{sec:eval}
We implement smart contract PCMC as an open-source tool called \smartace, that
is built upon the Solidity compiler. It works in the following automated steps:
(1) consume a Solidity smart contract and its interference invariants; (2)
validate the contract's conformance to MiniSol; (3) perform source-code analysis
and transformation (i.e., inheritance inlining, devirtualization,
\code{PTGBuilder}); (4) generate a local bundle in LLVM IR; (5) verify the
bundle using \seahorn~\cite{GurfinkelKahsai2015}.
In this section, we report on the effectiveness of \smartace in verifying
real-world smart contracts. A full description of the \smartace architecture and
of each case study is beyond the scope of this paper. Both \smartace and the case
studies are available\footnote{\anonymize{\url{\contractaceurl}}}.
Our evaluation answers the following research questions:
\begin{description}
\item[RQ1: Compliance.] Can MiniSol represent real-world smart contracts?
\item[RQ2: Effectiveness.] Is \smartace effective for MiniSol smart contracts?
\item[RQ3: Performance.] Is \smartace competitive with other techniques?
\end{description}

\paragraph{Benchmarks and Setup.}
To answer the above research questions, we used a benchmark of 89 properties
across 15 smart contracts (see~\cref{Table:Benchmarks}). Contracts
\code{Alchemist} to \code{Mana} are from \verx~\cite{PermenevDimitrov2020}.
Contracts \code{Fund} and \code{Auction} were added to offset the lack of
parameterized properties in existing benchmarks. The \code{QSPStaking} contract
comprises the Quantstamp Assurance Protocol\footnote{\scriptsize \url{\qspurl}} for which we
checked real-world properties provided by Quantstamp. Some properties require
additional instrumentation techniques (i.e.,
temporal~\cite{PermenevDimitrov2020} and aggregate~\cite{HajduJovanovic2019}
properties). Aggregate properties allow \smartace to reason about the sum of all
records within a mapping. In \cref{Table:Benchmarks}, \emph{Inv. Size} is the
clause size of an interference invariant manually provided to \smartace and
\emph{Users} is the maximum number of users requested by \code{PTGBuilder}. All
experiments were run on an Intel\trademark~Core i7\trademark~CPU~@~2.8GHz 4-core
machine with 16GB of RAM~on~Ubuntu~18.04.

\begin{table*}[t]
  \small
  \centering
  \begin{tabular}{@{}lrrrrrrcr@{}}
  \toprule \multicolumn{3}{c}{Contracts} &
  \phantom{abc} & \multicolumn{3}{c}{\smartace} &
  \phantom{abc} & \multicolumn{1}{c}{\verx} \\

  \cmidrule{1-3} \cmidrule{5-7} \cmidrule{9-9}
  Name & Prop. & LOC && Time & Inv. Size & Users && Time \\

  \midrule
  Alchemist & 3 & 401 && 7 & 0 & 7 && 29 \\
  ERC20 & 9 & 599 && 12 & 1 & 5 && 158 \\
  Melon & 16 & 462 && 30 & 0 & 7 && 408 \\
  MRV & 5 & 868 && 2 & 0 & 7 && 887 \\
  Overview & 4 & 66 && 4 & 0 & 8 && 211 \\
  PolicyPal & 4 & 815 && 26 & 0 & 8 && 20,773 \\
  Zebi & 5 & 1,209 && 8 & 0 & 7 && 77 \\
  Zilliqa & 5 & 377 && 8 & 0 & 7 && 94 \\
  \midrule
  Brickblock & 6 & 549 && 13 & 0 & 10 && 191 \\
  Crowdsale & 9 & 1,198 && 223 & 0 & 8 && 261 \\
  ICO & 8 & 650 && 371 & 0 & 16 && 6,817 \\
  VUToken & 5 & 1,120 && 19 & 0 & 10 && 715 \\
  Mana & 4 & 885 && --- & --- & --- && 41,409 \\
  \midrule
  Fund & 2 & 38 && 1 & 0 & 6 && --- \\
  Auction & 1 & 42 && 1 & 1 & 5 && --- \\
  QSPStaking & 4 & 1,550 && 3 & 7 & 8 && --- \\
  \bottomrule
\end{tabular}
  \caption{Experimental results for \smartace. All reported times are in seconds.}
  \label{Table:Benchmarks}
\end{table*}

\paragraph{RQ1: Compliance.}
To assess if the restrictions of MiniSol are reasonable, we find the number of
\emph{compliant} \verx benchmarks. We found that $8$ out of $13$ benchmarks are
compliant after removing dead code. With manual abstraction, $4$ more
benchmarks complied. \code{Brickblock} uses inline assembly to revert transactions
with smart contract senders. We remove the assembly as an over-approximation. To
support \code{Crowdsale}, we manually resolve dynamic calls not supported by
\smartace. In \code{ICO}, calls are made to arbitrary contracts (by address).
However, these calls adhere to \emph{effectively external callback
freedom}~\cite{GershuniAmit2019,PermenevDimitrov2020} and can be omitted. Also,
\code{ICO} uses dynamic allocation, but the allocation is performed once. We
inline the first allocation, and assert that all other allocations are
unreachable. To support \code{VUToken}, we replace a dynamic array of bounded
size with variables corresponding to each element of the array. The function
\code{_calcTokenAmount} iterates over the array, so we specialize each call
(i.e., \code{_calcTokenAmount_\{1,2,3,4\}}) to eliminate recursion. Two other
functions displayed unbounded behaviour (i.e., \code{massTransfer} and
\code{addManyToWhitelist}), but are used to sequence calls to other functions,
and do not impact reachability. We conclude that the restrictions of MiniSol are
reasonable.

\paragraph{RQ2: Effectiveness.}
To assess the effectiveness of \smartace, we determined the number of properties
verified from compliant \verx contracts. We found that all properties could be
verified, but also discovered that most properties were not parameterized. To
validate \smartace with parameterized properties, we conducted a second study
using \code{Auction}, as described on our development
blog\footnote{\anonymize{\url{\seahornblogurl}}}.
To validate \smartace in the context of large-scale contract development, we
performed a third study using \code{QSPStaking}.
In this study, $4$ properties were selected at random, from a specification
provided by Quantstamp, and validated. It required $2$ person days to model the
environment, and $1$ person day to discover an interference invariant. The major
overhead in modeling the environment came from manual abstraction of unbounded
arrays. The discovery of an interference invariant and array abstractions were
semi-automatic, and aided by counterexamples from \seahorn. For example, one
invariant used in our abstraction says that all elements in the array
\code{powersOf100} must be non-zero. This invariant was derived from a
counterexample in which $0$ was read spuriously from \code{powersOf100},
resulting in a division-by-zero error. We conclude that \smartace is suitable for
high-assurance contracts, and with proper automation, can be integrated into
contract development.

\paragraph{RQ3: Performance.}
To evaluate the performance of \smartace, we compared its verification time to
the reported time of \verx, a state-of-the-art, semi-automated verification tool.
Note that in \verx, predicate abstractions must be provided manually, whereas
\smartace automates this step. \verx was evaluated on a faster processor (3.4GHz)
with more RAM (64GB)\footnote{We have requested access to \verx and are awaiting a
response.}. In each case, \smartace significantly outperformed \verx, achieving a
speedup of at least 10x for all but $2$ contracts\footnote{We compare the average
time for \verx to the total evaluation time for \smartace.}.
One advantage of \smartace is that it benefits from state-of-the art software
model checkers, whereas the design of \verx requires implementing a new
verification tool. In addition, we suspect that local bundle abstractions
obtained through smart contract PCMC are easier to reason about than the global
arrays that \verx must quantify over. However, a complete explanation for the
performance improvements of \smartace is challenging without access to the source
code of \verx.
We observe that one bottleneck for \smartace is the number of users (which
extends the state space). A more precise \code{PTGBuilder} would reduce the
number of users. Upon manual inspection of \code{Melon} and \code{Alchemist} (in
a single bundle), we found that user state could be reduced by 28\%. We conclude
that \smartace can scale.

\section{Related Work}
\label{sec:related}

In recent years, the program analysis community has developed many tools for
smart contract analysis. These tool range from dynamic
analysis~\cite{JiangLiu2018,WuestholzChristakis2020} to static
analysis~\cite{LuuChu2016,MossbergManzano2019,KruppRossow2018,TsankovDan2018,GrechKong2018,NikolicKolluri2018,KolluriNikolic2019,WangZhang2019,BrentGrech2020}
and verification~\cite{KalraGoel2018,HajduJovanovic2019,WangLahiri2019,MavridouLaszka2019,PermenevDimitrov2020,SoLee2020}.
The latter are most related to \smartace since their focus is on
functional correctness, as opposed to generic rules (e.g., the absence of
reentrancy~\cite{GrossmanAbraham2018} and integer overflows). Existing
techniques for functional correctness are either deductive, and require that
most invariants be provided manually
(i.e.,~\cite{HajduJovanovic2019,WangLahiri2019}), or are automated but neglect
the parameterized nature of smart contracts
(i.e.,~\cite{MarescottiOtoni2020,MavridouLaszka2019,PermenevDimitrov2020,SoLee2020}). The
tools that do acknowledge parameterization employ static analysis
\cite{KolluriNikolic2019,BrentGrech2020}. In contrast, \smartace uses a
novel local reasoning technique that verifies parameterized safety properties
with less human guidance than deductive techniques.

More generally, parameterized systems form a rich field of research, as outlined
in~\cite{BloemJacobs2015}. The use of SCUNs was first proposed
in~\cite{GermanSistla1992}, and many other models exist for both synchronous and
asynchronous systems
(e.g.,~\cite{EsparzaGanty2013,SiegelAvrunin2004,SiegelGopalakrishnan2011}).
The approach of PCMC is not the only compositional solution for parameterized
verification. For instance, environmental abstraction~\cite{ClarkeTalupur2006}
considers a process and its environment, similar to the inductive and
interference invariants of \smartace. Other
approaches~\cite{PnueliRuah2001,FangPiterman2004} generalize from small
instances through the use of ranking functions. The combination of abstract
domains and SMPs has also proven useful in finding parameterized
invariants~\cite{AbdullaHH16}. The addresses used in our analysis are similar
to the scalarsets of \cite{IpDill1993}. Most compositional techniques require
cutoff analysis---considering network instances up to a given
size~\cite{EmersonNamjoshi1995,KaiserKroening2010,KhalimovJacobs2013}. Local
bundles avoid explicit cutoff analysis by simulating all smaller instances,
and is similar to existing work on bounded parameterized model
checking~\cite{EmersonTrefler2006}. \smartace is the first application of PCMC
in the context of smart contracts.

\section{Conclusions}
\label{sec:conclusions}

In this paper, we present a new verification approach for Solidity smart contracts.
Unlike many of the existing approaches, we automatically reason about smart contracts
relative to all of their clients and accross multiple transaction. Our approach is based
on treating smart contracts as a parameterized system and using Parameterized Compositional Model Checking (PCMC).

Our main theoretical contribution is to show that PCMC offers an exponential reduction for
$k$-universal safety verification of smart contracts. That is, verification of safety properties with 
$k$ arbitrary clients. 

The theoretical results of this paper are implemented in an automated Solidity verification tool \smartace. \smartace is built upon a novel model for smart contracts, in which
users are processes and communication is explicit. In this model, communication
is over-approximated by static analysis, and the results are sufficient to find
all local neighbourhoods, as required by PCMC. The underlying parameterized verification task is reduced to sequential Software Model Checking. In \smartace, we use the \seahorn verification framework for the underlying analysis. However, other Software Model Checkers can potentially be used as well. 

Our approach is almost completely automated -- \smartace automatically infers the necessary predicates, inductive invariants, and transaction summaries. The only requirement from the user is to provide an occasional interference invariant (that is validated by \smartace). However, we believe that this step can be automated as well through reduction to satisfiability of Constrained Horn Clauses. We leave exploring this to future work.


\bibliographystyle{splncs04}
\bibliography{bibliography}

\begin{thebibliography}{10}
\providecommand{\url}[1]{\texttt{#1}}
\providecommand{\urlprefix}{URL }
\providecommand{\doi}[1]{https://doi.org/#1}

\bibitem{AbdullaHH13}
Abdulla, P.A., Haziza, F., Hol{\'{\i}}k, L.: All for the price of few. In:
  Giacobazzi, R., Berdine, J., Mastroeni, I. (eds.) Verification, Model
  Checking, and Abstract Interpretation, 14th International Conference, {VMCAI}
  2013, Rome, Italy, January 20-22, 2013. Proceedings. Lecture Notes in
  Computer Science, vol.~7737, pp. 476--495. Springer (2013).
  \doi{10.1007/978-3-642-35873-9\_28}

\bibitem{AbdullaHH16}
Abdulla, P.A., Haziza, F., Hol{\'{\i}}k, L.: Parameterized verification through
  view abstraction. Int. J. Softw. Tools Technol. Transf.  \textbf{18}(5),
  495--516 (2016). \doi{10.1007/s10009-015-0406-x}

\bibitem{Allen1970}
Allen, F.E.: Control flow analysis. In: Proceedings of a Symposium on Compiler
  Optimization. pp. 1--19. Association for Computing Machinery, New York, NY,
  USA (1970). \doi{10.1145/800028.808479},
  \url{https://doi.org/10.1145/800028.808479}

\bibitem{BloemJacobs2015}
Bloem, R., Jacobs, S., Khalimov, A., Konnov, I., Rubin, S., Veith, H., Widder,
  J.: Decidability in parameterized verification. {SIGACT} News
  \textbf{47}(2),  53--64 (2016). \doi{10.1145/2951860.2951873}

\bibitem{BrentGrech2020}
Brent, L., Grech, N., Lagouvardos, S., Scholz, B., Smaragdakis, Y.: Ethainter:
  a smart contract security analyzer for composite vulnerabilities. In:
  Donaldson, A.F., Torlak, E. (eds.) Proceedings of the 41st {ACM} {SIGPLAN}
  International Conference on Programming Language Design and Implementation,
  {PLDI} 2020, London, UK, June 15-20, 2020. pp. 454--469. {ACM} (2020).
  \doi{10.1145/3385412.3385990}

\bibitem{ClarkeTalupur2006}
Clarke, E.M., Talupur, M., Veith, H.: Environment abstraction for parameterized
  verification. In: Emerson, E.A., Namjoshi, K.S. (eds.) Verification, Model
  Checking, and Abstract Interpretation, 7th International Conference, {VMCAI}
  2006, Charleston, SC, USA, January 8-10, 2006, Proceedings. Lecture Notes in
  Computer Science, vol.~3855, pp. 126--141. Springer (2006).
  \doi{10.1007/11609773\_9}

\bibitem{EmersonNamjoshi1995}
Emerson, E.A., Namjoshi, K.S.: On reasoning about rings. Int. J. Found. Comput.
  Sci.  \textbf{14}(4),  527--550 (2003). \doi{10.1142/S0129054103001881}

\bibitem{EmersonTrefler2006}
Emerson, E.A., Trefler, R.J., Wahl, T.: Reducing model checking of the few to
  the one. In: Liu, Z., He, J. (eds.) Formal Methods and Software Engineering,
  8th International Conference on Formal Engineering Methods, {ICFEM} 2006,
  Macao, China, November 1-3, 2006, Proceedings. Lecture Notes in Computer
  Science, vol.~4260, pp. 94--113. Springer (2006). \doi{10.1007/11901433\_6}

\bibitem{EsparzaGanty2013}
Esparza, J., Ganty, P., Majumdar, R.: Parameterized verification of
  asynchronous shared-memory systems. In: Sharygina, N., Veith, H. (eds.)
  Computer Aided Verification - 25th International Conference, {CAV} 2013,
  Saint Petersburg, Russia, July 13-19, 2013. Proceedings. Lecture Notes in
  Computer Science, vol.~8044, pp. 124--140. Springer (2013).
  \doi{10.1007/978-3-642-39799-8\_8}

\bibitem{FangPiterman2004}
Fang, Y., Piterman, N., Pnueli, A., Zuck, L.D.: Liveness with invisible
  ranking. In: Steffen, B., Levi, G. (eds.) Verification, Model Checking, and
  Abstract Interpretation, 5th International Conference, {VMCAI} 2004, Venice,
  Italy, January 11-13, 2004, Proceedings. Lecture Notes in Computer Science,
  vol.~2937, pp. 223--238. Springer (2004). \doi{10.1007/978-3-540-24622-0\_19}

\bibitem{GermanSistla1992}
German, S.M., Sistla, A.P.: Reasoning about systems with many processes. J.
  {ACM}  \textbf{39}(3),  675--735 (1992). \doi{10.1145/146637.146681}

\bibitem{GershuniAmit2019}
Gershuni, E., Amit, N., Gurfinkel, A., Narodytska, N., Navas, J.A., Rinetzky,
  N., Ryzhyk, L., Sagiv, M.: Simple and precise static analysis of untrusted
  linux kernel extensions. In: McKinley, K.S., Fisher, K. (eds.) Proceedings of
  the 40th {ACM} {SIGPLAN} Conference on Programming Language Design and
  Implementation, {PLDI} 2019, Phoenix, AZ, USA, June 22-26, 2019. pp.
  1069--1084. {ACM} (2019). \doi{10.1145/3314221.3314590}

\bibitem{GrechKong2018}
Grech, N., Kong, M., Jurisevic, A., Brent, L., Scholz, B., Smaragdakis, Y.:
  Madmax: surviving out-of-gas conditions in ethereum smart contracts. Proc.
  {ACM} Program. Lang.  \textbf{2}({OOPSLA}),  116:1--116:27 (2018).
  \doi{10.1145/3276486}

\bibitem{GrossmanAbraham2018}
Grossman, S., Abraham, I., Golan{-}Gueta, G., Michalevsky, Y., Rinetzky, N.,
  Sagiv, M., Zohar, Y.: Online detection of effectively callback free objects
  with applications to smart contracts. Proc. {ACM} Program. Lang.
  \textbf{2}({POPL}),  48:1--48:28 (2018). \doi{10.1145/3158136}

\bibitem{GurfinkelKahsai2015}
Gurfinkel, A., Kahsai, T., Komuravelli, A., Navas, J.A.: The seahorn
  verification framework. In: Kroening, D., Pasareanu, C.S. (eds.) Computer
  Aided Verification - 27th International Conference, {CAV} 2015, San
  Francisco, CA, USA, July 18-24, 2015, Proceedings, Part {I}. Lecture Notes in
  Computer Science, vol.~9206, pp. 343--361. Springer (2015).
  \doi{10.1007/978-3-319-21690-4\_20}

\bibitem{GurfinkelSharon2016}
Gurfinkel, A., Shoham, S., Meshman, Y.: Smt-based verification of parameterized
  systems. In: Zimmermann, T., Cleland{-}Huang, J., Su, Z. (eds.) Proceedings
  of the 24th {ACM} {SIGSOFT} International Symposium on Foundations of
  Software Engineering, {FSE} 2016, Seattle, WA, USA, November 13-18, 2016. pp.
  338--348. {ACM} (2016). \doi{10.1145/2950290.2950330}

\bibitem{HajduJovanovic2019}
Hajdu, {\'{A}}., Jovanovic, D.: solc-verify: {A} modular verifier for solidity
  smart contracts. In: Chakraborty, S., Navas, J.A. (eds.) Verified Software.
  Theories, Tools, and Experiments - 11th International Conference, {VSTTE}
  2019, New York City, NY, USA, July 13-14, 2019, Revised Selected Papers.
  Lecture Notes in Computer Science, vol. 12031, pp. 161--179. Springer (2019).
  \doi{10.1007/978-3-030-41600-3\_11}

\bibitem{IpDill1993}
Ip, C.N., Dill, D.L.: Better verification through symmetry. In: Agnew, D.,
  Claesen, L.J.M., Camposano, R. (eds.) Computer Hardware Description Languages
  and their Applications, Proceedings of the 11th {IFIP} {WG10.2} International
  Conference on Computer Hardware Description Languages and their Applications
  - {CHDL} '93, sponsored by {IFIP} {WG10.2} and in cooperation with {IEEE}
  COMPSOC, Ottawa, Ontario, Canada, 26-28 April, 1993. {IFIP} Transactions,
  vol.~{A-32}, pp. 97--111. North-Holland (1993)

\bibitem{JiangLiu2018}
Jiang, B., Liu, Y., Chan, W.K.: Contractfuzzer: fuzzing smart contracts for
  vulnerability detection. In: Huchard, M., K{\"{a}}stner, C., Fraser, G.
  (eds.) Proceedings of the 33rd {ACM/IEEE} International Conference on
  Automated Software Engineering, {ASE} 2018, Montpellier, France, September
  3-7, 2018. pp. 259--269. {ACM} (2018). \doi{10.1145/3238147.3238177}

\bibitem{KaiserKroening2010}
Kaiser, A., Kroening, D., Wahl, T.: Dynamic cutoff detection in parameterized
  concurrent programs. In: Touili, T., Cook, B., Jackson, P.B. (eds.) Computer
  Aided Verification, 22nd International Conference, {CAV} 2010, Edinburgh, UK,
  July 15-19, 2010. Proceedings. Lecture Notes in Computer Science, vol.~6174,
  pp. 645--659. Springer (2010). \doi{10.1007/978-3-642-14295-6\_55}

\bibitem{KalraGoel2018}
Kalra, S., Goel, S., Dhawan, M., Sharma, S.: {ZEUS:} analyzing safety of smart
  contracts. In: 25th Annual Network and Distributed System Security Symposium,
  {NDSS} 2018, San Diego, California, USA, February 18-21, 2018. The Internet
  Society (2018)

\bibitem{KhalimovJacobs2013}
Khalimov, A., Jacobs, S., Bloem, R.: Towards efficient parameterized synthesis.
  In: Giacobazzi, R., Berdine, J., Mastroeni, I. (eds.) Verification, Model
  Checking, and Abstract Interpretation, 14th International Conference, {VMCAI}
  2013, Rome, Italy, January 20-22, 2013. Proceedings. Lecture Notes in
  Computer Science, vol.~7737, pp. 108--127. Springer (2013).
  \doi{10.1007/978-3-642-35873-9\_9}

\bibitem{Kildall1973}
Kildall, G.A.: A unified approach to global program optimization. In: Fischer,
  P.C., Ullman, J.D. (eds.) Conference Record of the {ACM} Symposium on
  Principles of Programming Languages, Boston, Massachusetts, USA, October
  1973. pp. 194--206. {ACM} Press (1973). \doi{10.1145/512927.512945}

\bibitem{Kolb2020}
Kolb, J.: A Languge-Based Approach to Smart Contract Engineering. Ph.D. thesis,
  University of California at Berkeley, USA (2020)

\bibitem{KolluriNikolic2019}
Kolluri, A., Nikolic, I., Sergey, I., Hobor, A., Saxena, P.: Exploiting the
  laws of order in smart contracts. In: Zhang, D., M{\o}ller, A. (eds.)
  Proceedings of the 28th {ACM} {SIGSOFT} International Symposium on Software
  Testing and Analysis, {ISSTA} 2019, Beijing, China, July 15-19, 2019. pp.
  363--373. {ACM} (2019). \doi{10.1145/3293882.3330560}

\bibitem{KruppRossow2018}
Krupp, J., Rossow, C.: teether: Gnawing at ethereum to automatically exploit
  smart contracts. In: Enck, W., Felt, A.P. (eds.) 27th {USENIX} Security
  Symposium, {USENIX} Security 2018, Baltimore, MD, USA, August 15-17, 2018.
  pp. 1317--1333. {USENIX} Association (2018)

\bibitem{LuuChu2016}
Luu, L., Chu, D., Olickel, H., Saxena, P., Hobor, A.: Making smart contracts
  smarter. In: Weippl, E.R., Katzenbeisser, S., Kruegel, C., Myers, A.C.,
  Halevi, S. (eds.) Proceedings of the 2016 {ACM} {SIGSAC} Conference on
  Computer and Communications Security, Vienna, Austria, October 24-28, 2016.
  pp. 254--269. {ACM} (2016). \doi{10.1145/2976749.2978309}

\bibitem{MarescottiOtoni2020}
Marescotti, M., Otoni, R., Alt, L., Eugster, P., Hyv{\"{a}}rinen, A.E.J.,
  Sharygina, N.: Accurate smart contract verification through direct modelling.
  In: Margaria, T., Steffen, B. (eds.) Leveraging Applications of Formal
  Methods, Verification and Validation: Applications - 9th International
  Symposium on Leveraging Applications of Formal Methods, ISoLA 2020, Rhodes,
  Greece, October 20-30, 2020, Proceedings, Part {III}. Lecture Notes in
  Computer Science, vol. 12478, pp. 178--194. Springer (2020).
  \doi{10.1007/978-3-030-61467-6\_12}

\bibitem{MavridouLaszka2019}
Mavridou, A., Laszka, A., Stachtiari, E., Dubey, A.: Verisolid:
  Correct-by-design smart contracts for ethereum. In: Goldberg, I., Moore, T.
  (eds.) Financial Cryptography and Data Security - 23rd International
  Conference, {FC} 2019, Frigate Bay, St. Kitts and Nevis, February 18-22,
  2019, Revised Selected Papers. Lecture Notes in Computer Science, vol. 11598,
  pp. 446--465. Springer (2019). \doi{10.1007/978-3-030-32101-7\_27}

\bibitem{MossbergManzano2019}
Mossberg, M., Manzano, F., Hennenfent, E., Groce, A., Grieco, G., Feist, J.,
  Brunson, T., Dinaburg, A.: Manticore: {A} user-friendly symbolic execution
  framework for binaries and smart contracts. In: 34th {IEEE/ACM} International
  Conference on Automated Software Engineering, {ASE} 2019, San Diego, CA, USA,
  November 11-15, 2019. pp. 1186--1189. {IEEE} (2019).
  \doi{10.1109/ASE.2019.00133}

\bibitem{NamjoshiTrefler2016}
Namjoshi, K.S., Trefler, R.J.: Parameterized compositional model checking. In:
  Chechik, M., Raskin, J. (eds.) Tools and Algorithms for the Construction and
  Analysis of Systems - 22nd International Conference, {TACAS} 2016, Held as
  Part of the European Joint Conferences on Theory and Practice of Software,
  {ETAPS} 2016, Eindhoven, The Netherlands, April 2-8, 2016, Proceedings.
  Lecture Notes in Computer Science, vol.~9636, pp. 589--606. Springer (2016).
  \doi{10.1007/978-3-662-49674-9\_39}

\bibitem{NikolicKolluri2018}
Nikolic, I., Kolluri, A., Sergey, I., Saxena, P., Hobor, A.: Finding the
  greedy, prodigal, and suicidal contracts at scale. In: Proceedings of the
  34th Annual Computer Security Applications Conference, {ACSAC} 2018, San
  Juan, PR, USA, December 03-07, 2018. pp. 653--663. {ACM} (2018).
  \doi{10.1145/3274694.3274743}

\bibitem{OwickiGries1976}
Owicki, S.S., Gries, D.: An axiomatic proof technique for parallel programs
  {I}. Acta Informatica  \textbf{6},  319--340 (1976). \doi{10.1007/BF00268134}

\bibitem{PermenevDimitrov2020}
Permenev, A., Dimitrov, D., Tsankov, P., Drachsler{-}Cohen, D., Vechev, M.T.:
  Verx: Safety verification of smart contracts. In: 2020 {IEEE} Symposium on
  Security and Privacy, {SP} 2020, San Francisco, CA, USA, May 18-21, 2020. pp.
  1661--1677. {IEEE} (2020). \doi{10.1109/SP40000.2020.00024}

\bibitem{PnueliRuah2001}
Pnueli, A., Ruah, S., Zuck, L.D.: Automatic deductive verification with
  invisible invariants. In: Margaria, T., Yi, W. (eds.) Tools and Algorithms
  for the Construction and Analysis of Systems, 7th International Conference,
  {TACAS} 2001 Held as Part of the Joint European Conferences on Theory and
  Practice of Software, {ETAPS} 2001 Genova, Italy, April 2-6, 2001,
  Proceedings. Lecture Notes in Computer Science, vol.~2031, pp. 82--97.
  Springer (2001). \doi{10.1007/3-540-45319-9\_7}

\bibitem{SiegelAvrunin2004}
Siegel, S.F., Avrunin, G.S.: Verification of mpi-based software for scientific
  computation. In: Graf, S., Mounier, L. (eds.) Model Checking Software, 11th
  International {SPIN} Workshop, Barcelona, Spain, April 1-3, 2004,
  Proceedings. Lecture Notes in Computer Science, vol.~2989, pp. 286--303.
  Springer (2004). \doi{10.1007/978-3-540-24732-6\_20}

\bibitem{SiegelGopalakrishnan2011}
Siegel, S.F., Gopalakrishnan, G.: Formal analysis of message passing - (invited
  talk). In: Jhala, R., Schmidt, D.A. (eds.) Verification, Model Checking, and
  Abstract Interpretation - 12th International Conference, {VMCAI} 2011,
  Austin, TX, USA, January 23-25, 2011. Proceedings. Lecture Notes in Computer
  Science, vol.~6538, pp. 2--18. Springer (2011).
  \doi{10.1007/978-3-642-18275-4\_2}

\bibitem{SoLee2020}
So, S., Lee, M., Park, J., Lee, H., Oh, H.: {VERISMART:} {A} highly precise
  safety verifier for ethereum smart contracts. In: 2020 {IEEE} Symposium on
  Security and Privacy, {SP} 2020, San Francisco, CA, USA, May 18-21, 2020. pp.
  1678--1694. {IEEE} (2020). \doi{10.1109/SP40000.2020.00032}

\bibitem{TsankovDan2018}
Tsankov, P., Dan, A.M., Drachsler{-}Cohen, D., Gervais, A., B{\"{u}}nzli, F.,
  Vechev, M.T.: Securify: Practical security analysis of smart contracts. In:
  Lie, D., Mannan, M., Backes, M., Wang, X. (eds.) Proceedings of the 2018
  {ACM} {SIGSAC} Conference on Computer and Communications Security, {CCS}
  2018, Toronto, ON, Canada, October 15-19, 2018. pp. 67--82. {ACM} (2018).
  \doi{10.1145/3243734.3243780}

\bibitem{WangZhang2019}
Wang, S., Zhang, C., Su, Z.: Detecting nondeterministic payment bugs in
  ethereum smart contracts. Proc. {ACM} Program. Lang.  \textbf{3}({OOPSLA}),
  189:1--189:29 (2019). \doi{10.1145/3360615}

\bibitem{WangLahiri2019}
Wang, Y., Lahiri, S.K., Chen, S., Pan, R., Dillig, I., Born, C., Naseer, I.,
  Ferles, K.: Formal verification of workflow policies for smart contracts in
  {Azure} blockchain. In: VSTTE. LNCS, vol. 12031, pp. 87--106. Springer (2019)

\bibitem{WuestholzChristakis2020}
W{\"{u}}stholz, V., Christakis, M.: Harvey: a greybox fuzzer for smart
  contracts. In: Devanbu, P., Cohen, M.B., Zimmermann, T. (eds.) {ESEC/FSE}
  '20: 28th {ACM} Joint European Software Engineering Conference and Symposium
  on the Foundations of Software Engineering, Virtual Event, USA, November
  8-13, 2020. pp. 1398--1409. {ACM} (2020). \doi{10.1145/3368089.3417064}

\bibitem{ZhongCheang2020}
Zhong, J.E., Cheang, K., Qadeer, S., Grieskamp, W., Blackshear, S., Park, J.,
  Zohar, Y., Barrett, C.W., Dill, D.L.: The move prover. In: Lahiri, S.K.,
  Wang, C. (eds.) Computer Aided Verification - 32nd International Conference,
  {CAV} 2020, Los Angeles, CA, USA, July 21-24, 2020, Proceedings, Part {I}.
  Lecture Notes in Computer Science, vol. 12224, pp. 137--150. Springer (2020).
  \doi{10.1007/978-3-030-53288-8\_7}

\end{thebibliography}


\ifthenelse{\boolean{extended}}{\appendix
\newpage
\section{The Auction LTS}
\label{Appendix:Lts}

In this section we formally define the bundle of $\cC \gets \mathsol{Auction}$,
from \cref{Fig:Auction}. We no longer assume that $\cC$ consists of a single
transaction. This means that each input must also specify the name of the
function to execute. In the case of $\cC$, the function name are defined by
$\textrm{Tx} \gets \{ \texttt{constructor},\; \texttt{bid},\;
\texttt{withdraw},\; \texttt{stop} \}$.

The control states are $\control( \cC, \cA ) \subseteq ( \cA ) \times ( \bbD
\times \bbD ) \times ( \bbD ) $, where $\cA$ is the address space. For each
state $( a_1, d_1, d_2, \mathrm{aux}_1 ) \in \control( \cC, \cA )$, $a_1$,
$d_1$, and $d_2$ correspond to the \code{manager}, \code{leadingBid}, and
\code{stopped}, respectively. An auxiliary variable, $\mathrm{aux}_1$, is added
to indicate whether the constructor has been called. When $\mathrm{aux}_1$ is
$0$, then the constructor has not been called and \emph{must} be called in the
next transaction. When $\mathrm{aux}_1$ is $1$, then the constructor has been
called, and may not be called again. For simplicity, $\data( s, 2 ) =
\mathrm{aux}_1$.

The user states are $\user( \cC, \cA ) \subseteq \cA \times ( \bbD )$. For each
user $( x, y_1 ) \in \control( \cC, \cA )$, $x$ is the user's address and $y_1$
is the user's bid.

The inputs are $\inputs( \cC, \cA ) \subseteq \textrm{Tx} \times ( \cA \times
\cA ) \times ( \bbD )$. For any $p = (t, x_1, x_2, y_1) \in \inputs( \cC, \cA
)$, the interpretation of $p$ depends on $t$. In all cases, $x_1$ is
\code{msg.sender}. If $t = \texttt{constructor}$, then $x_2$ represents
\code{mgr}. If $t = \mathsol{bid}$, then $y_1$ represents \code{amount}. In all
other cases, $x_2$ and $y_1$ are unused.

For the rest of this section, we restrict our discussion to $\lts( \cC, N )$.
This means that $\cA = [N]$, and that $\cM$ is the identify function for $[N]$.
We now present transactional semantics for $f = \bbrackfn{\cC}_{\cM}$. We assume
that $N \ge 2$, so that the zero-account and smart contract account are defined.

\begin{align*}
    f( ( s, \vu ), ( t, x_1, x_2, y_1 ) )
    &=
    \begin{cases}
        ( s, \vu ) & \text{if } x_1 = \id( \vu_0 ) \\
        ( s, \vu ) & \text{if } x_1 = \id( \vu_1 ) \\
        g_1( ( s, \vu ), ( x_1, x_2 ) ) & \text{if } t = \texttt{constructor} \\
        g_2( ( s, \vu ), ( x_1, y_1 ) ) & \text{if } t = \texttt{bid} \\
        g_3( ( s, \vu ), ( x_1 ) ) & \text{if } t = \texttt{withdraw} \\
        g_4( ( s, \vu ), ( x_1 ) ) & \text{if } t = \texttt{stop} \\
    \end{cases} \\
    g_1( ( s, \vu ), ( x_1, x_2 ) )
    &=
    \begin{cases}
        ( s, \vu ) & \text{if } \data( s, 2 ) \ne 0 \\
        ( s', \vu ) & \text{else, s.t. }
                      \role( s', 0 ) = x_2,\;
                      \data( s', 0 ) = \data( s, 0 ), \\&
                      \data( s', 1 ) = \data( s, 1 ), \text{ and }
                      \data( s', 2 ) = 1
    \end{cases} \\
    g_2( ( s, \vu ), ( x_1, y_1 ) )
    &=
    \begin{cases}
        ( s, \vu ) & \text{if } \data( s, 2 ) \ne 1 \\
        ( s, \vu ) & \text{if } \role( s, 0 ) = x_1 \\
        ( s, \vu ) & \text{if } \data( s, 0 ) \ge y_1 \\
        ( s, \vu ) & \text{if } \data( s, 1 ) = 1 \\
        ( s, \vu' ) & \text{else, s.t. }
                      \forall i \in [N] \cdot
                      \id( \vu'_i ) = \id( \vu_i )
                      \land
                      i \ne x_1 \implies \\& \rec( \vu'_i ) = \data( \vu_i )
                      \land
                      i = x_1 \implies \rec( \vu'_i ) = ( y_1 )
    \end{cases}
\end{align*}
\begin{align*}
    g_3( ( s, \vu ), ( x_1 ) )
    &=
    \begin{cases}
        ( s, \vu ) & \text{if } \data( s, 2 ) \ne 1 \\
        ( s, \vu ) & \text{if } \role( s, 0 ) = x_1 \\
        ( s, \vu ) & \text{if } \data( s, 0 ) \ne \rec( \vu_{x_1} )_0 \\
        ( s, \vu' ) & \text{else, s.t. }
                      \forall i \in [N] \cdot
                      \id( \vu'_i ) = \id( \vu_i )
                      \land
                      i \ne x_1 \implies \\& \rec( \vu'_i ) = \data( \vu_i )
                      \land
                      i = x_1 \implies \rec( \vu'_i ) = \vec{0}
    \end{cases} \\
    g_4( ( s, \vu ), ( x_1 ) )
    &=
    \begin{cases}
        ( s, \vu ) & \text{if } \data( s, 2 ) \ne 1 \\
        ( s, \vu ) & \text{if } \role( s, 0 ) \ne x_1 \\
        ( s', \vu ) & \text{else, s.t. }
                      \role( s', 0 ) = \role( s, 0 ), \\&
                      \data( s', 0 ) = \data( s, 0 ),\;
                      \data( s', 1 ) = 1, \text{ and } \\&
                      \data( s', 2 ) = \data( s, 2 )
    \end{cases}
\end{align*}

Recall from \cref{Sec:SyntaxSemantics} that a reverted transaction (such as a
failed require statement) is treated as a no-op. In $f$, these no-ops correspond
to the cases that send $( s, \vu )$ to $( s, \vu )$. Most of these cases, such
as the second case of $g_2$, are derived directly from the source code (line~\ref{line:req2}    
for this case). However, the first two cases for $f$, along with the first case
for each $g_i$, are special.

The first case for $f$ guards against transactions sent from the zero-account.
An important observation is that the sender, $x_1$, is compared to $\id( \vu_0
)$, rather than $0$. This is because $\cM$ is used to map all literal addresses
to users. For the first case of $f$, $\cM$ first maps $0$ to $\vu_0$, and then
$\id( \vu_0 )$ maps $\vu_0$ back to $0$\footnote{This indirection allows for
users to be readdressed in \cref{Def:Participation}}. Note that the second case
of $f$ is similar to the first case, in that the second case guards against
transactions from the smart contract account.

The first case of $g_1$ guards against calls to \texttt{constructor} after the
constructor has already the called. Similarly, the first case of $g_2$, $g_3$,
and $g_4$, each guard against calls to non-constructor functions before the
constructor has been called. These cases ensure that the contract is constructed
once and only once.

\section{MiniSol: Syntax and Semantics}
\label{Appendix:MiniSol}

\begin{figure}
  \scriptsize
  \input{diagrams/minisol_grammar}
  \normalsize
  \vspace{-1.2em}
  \caption{The formal grammar of the MiniSol language.}
  \vspace{-1.5em}
  \label{Fig:MiniSol}
\end{figure}

MiniSol is an extension of MicroSol that provides more communication primitives.
Specifically, MiniSol introduces: transactions with multiple calls; a currency
called \emph{Ether}; a built-in time primitive. The grammar for MiniSol is
presented in \cref{Fig:MiniSol}. As in MicroSol, the address type in MiniSol is
non-arithmetic, and mappings must relate addresses to integers. Note, however,
that MiniSol does allow for multi-dimensional mappings. Throughout the section,
we illustrate MiniSol  using the MiniSol program in \cref{Fig:Fund}.

\begin{figure}[t]
    \centering
    \lstinputlisting[style=solidity,multicols=2]{code/sol/minisol.sol}
    \caption{A MiniSol program that allows a \code{FundManager} to raise funds
             inside a \code{Fund}. The funds are forwarded to a predetermined
             destination after a duration.}
    \label{Fig:Fund}
\end{figure}

In MiniSol, a transaction consists of one \emph{or more} user calls. As in
MicroSol, the first call is performed by a user that is neither the zero-account
nor a smart contract account. Each subsequent call is made by the account of the
smart contract currently executing a functions (e.g., line~\ref{line:call} in
\cref{Fig:Fund}). As in most object-oriented languages, each function has a
\emph{visibility modifier} (see $\langle \text{Vis} \rangle$ in
\cref{Fig:MiniSol}). Functions with a \emph{public} visibility can be called by
users (e.g., line~\ref{line:fund-withdraw} in \cref{Fig:Fund}), whereas
functions with an \emph{internal} visibility can only be executed as part of an
ongoing call (e.g., the function at line~\ref{line:fund-buy} called at
line~\ref{line:call-buy} in \cref{Fig:Fund}). In either case, the inputs
\code{tx.origin} and \code{msg.sender} expose the addresses of the first and
most recent user to call a function during the current transaction,
respectively.

A currency called Ether is integrated into the MiniSol language. Each user has
a balance of Ether, that is exposed through the built-in \code{balance} function
(e.g., line~\ref{line:balance} in \cref{Fig:Fund}). A user that is neither the
zero-account nor a smart contract account can spontaneously acquire Ether
through \emph{minting}. A smart contract account can receive Ether through a
function marked \code{payable} (e.g., line~\ref{line:payable} in
\cref{Fig:Fund}). The amount of Ether sent to a payable function is accessible
through the \code{msg.value} input (e.g., line~\ref{line:value} in
\cref{Fig:Fund}). There are two ways that one smart contract can send Ether to
another smart contract. If a function is called by name, then the \code{.value}
modifier is used (e.g., line~\ref{line:call} in \cref{Fig:Fund}). Otherwise, a
built-in \code{transfer} function (e.g., line~\ref{line:transfer} in
\cref{Fig:Fund}) can be used to call a special unnamed function, called a
\emph{fallback function} (e.g., line~\ref{line:payable} in \cref{Fig:Fund}).

MiniSol also exposes a built-in time primitive to each function. In MiniSol,
time is measured with respect to the number of transactions that have been
executed. Transactions are executed in batches, known as \emph{blocks}. The
\code{block.number} input exposes the number of blocks executed so far. The
\code{block.timestamp} input exposes the real-world time at which the last block
was executed (e.g., line~\ref{line:timestamp} in \cref{Fig:Fund}). MiniSol
enforces that \code{block.timestamp} is non-decreasing.

The semantics of a MiniSol program are similar to those of a MicroSol program.
For a MiniSol program $\cC$ with transaction $\tran$, the transition function
$\bbrackfn{\cC}_{\cM}$ is still determined by the (usual) semantics of the
single transaction $\tran$. However, the state space and actions of each MiniSol
program also include information about the first caller of each transaction,
Ether, and time. The changes are outlined in \cref{Table:MiniSolSemantics}.

\begin{table*}[t]
  \small
  \centering
  \begin{tabular}{@{}lll@{}}
    \toprule Expression & \phantom{abc} & Interpretation \\

    \midrule
    $\role(s, i)$ && The value of the $i$-th address-typed state variable. \\
    $\data(s, 0)$ && The current value of \code{block.number}. \\
    $\data(s, 1)$ && The current value of \code{block.timestamp}. \\
    $\data(s, 2 + i)$ && The value of the $i$-th numeric-typed state variable. \\
    \midrule
    $\id( \vu_i )$ && The address of the $i$-th user. \\
    $\rec( \vu_i )_0$ && The balance of the $i$-th user. \\
    $\rec( \vu_i )_{1 + j}$ && The $j$-th mapping value of the $i$-th user. \\
    \midrule
    $\client(p, 0)$ && The value of \code{tx.origin}. \\
    $\client(p, 1)$ && The value of \code{msg.sender}. \\
    $\client(p, 2 + i)$ && The value of the $i$-th address-typed input variable. \\
    $\arg(p, 0)$ && The value of \code{msg.value}. \\
    $\arg(p, 1 + i)$ && The value of the $i$-th numeric-typed input variable. \\

    \bottomrule
  \end{tabular}
  \caption{The interpretation of a control state $s$, a user configuration $\vu$,
           and an action $p$ in MiniSol.}
  \label{Table:MiniSolSemantics}
\end{table*}

\section{Uniformity and Participation}
\label{Appendix:Uniformity}
In this section we show that MicroSol (and MiniSol) satisfy the uniformity
assumptions of PCMC given in \cref{Sec:Background} (i.e., there are finitely
many finite neighbourhoods). First, we show that if the set of implicit users is
finite, then each transaction of the network is performed against a
neighbourhood of bounded size. Second, we show that the users within the given
neighbourhood are interchangeable (up to address). Finally, we show that if all
transactions can be reduced to a finite set of addresses. From these results, it
follows that all MicroSol transactions are uniform.

\subsection{Reduction to a Bounded Neighbourhood}

Let $\cC$ be a contract. This section proves that if a PT
over-approximation for $\cC$ is finite, then every transaction of $\cC$ can be
computed by projecting the network onto a neighbourhood of bounded size. To do
this, a precise definition must first be given of an over-approximation and a
projection. The notion of over-approximation generalizes a PTG from
\cref{sec:communication}.

\begin{definition}[PT Over-Approximation]
    \label{Def:PTOverApprox}
    For $N \in \nat$ and $p \in \inputs( \cC, [N] )$, the tuple $( E, T, I )
    \subseteq \nat \times \nat \times \nat$ \emph{over-approximates} the PT,
    $\pt( \cC, N, p ) = ( \explicit, \\ \transient, \implicit )$, when $\{ i
    \mid \explicit( i, x ) \} \subseteq E$, $\{ i \mid \transient( i, x ) \}
    \subseteq T$, and $\implicit \subseteq I$. Furthermore, $( E, T, I )$ is a
    \emph{PT over-approximation} of $\cC$ when $( E, T, I )$ over-approximates
    all PT's of $\cC$.
\end{definition}

Note that when a PT Over-Approximation is finite, the over-approximation
corresponds to a \code{PTGBuilder} PTG. This is because each label in a
\code{PTGBuilder} PTG corresponds to precisely one user, with a specific
address. Therefore, the users that appear in a PT over-approximate also
over-approximate the local neighbourhood of a transaction. 

\newcommand{\view}{\textsf{view}}
\begin{definition}[View]
    Let $(E, T, I)$ be a PT over-approximation of $\cC$, $s \in \control( \cC,
    [N] )$, and $p \in \inputs( \cC, [N] )$. The \emph{$( E, T, I )$-view for
    $s$ and $p$} is: $$\view(E, T, I, s, p) \gets \{ \client(p, i) \mid i \in E
    \} \cup \{ \role(s, i) \mid i \in T \} \cup I$$
\end{definition}

In \cref{Sec:Sub:Localization} we use sets of addresses to represent local
neighbourhoods. Intuitively, each address corresponds to the user (or a
representative of the user) with the given address. In other words, the users in
a bundle are projected onto the neighbourhood, as defined below in
\cref{Def:Projection}.

\begin{definition}[Projection]
    \label{Def:Projection}
    Let $\cA_2 \subseteq \cA_1 \subseteq \nat$, $N = |\cA_1|$, $M = |\cA_2|$,
    and $\vu \in \user( \cC, \cA_1 )^N$. Then $\vcv \in \user( \cC, \cA_2)^M$ is
    a \emph{projection of $\vu$ onto $\cA_2$}, written $\vcv = \pi_{\cA_2}( \vu
    )$, if there exists a total, injective, order-preserving mapping, $\sigma
    [M] \rightarrow [N]$ such that $\cA_2 = \{ \id( \vcv_i ) \mid i \in [M] \}$
    and $\forall i \in [M] \cdot \vcv_i = \vu_{\sigma(i)}$.
\end{definition}

Notice that as $\sigma$ in \cref{Def:Projection} is order-preserving, each
projection is unique (with respect to $\cA$). This means that projections can be
compared directly. The following theorem shows that each transaction can be
reduced to a neighbourhood projection of finitely many clients.

\begin{theorem}
    \label{Thm:UniformityOne}
    Let $(E, T, I)$ be a PT over-approximation of $\cC$ with $N_0 = |E| + |T| +
    |I|$ finite. Then for all $N \ge N_0$ with $(S, P, f, s_0) = \lts(\cC, N)$,
    and for all $(s, u) \in S$, $p \in P$, if $(s', \vu') = f(s, \vu, p)$ and
    $\cA = \view(E, T, I, s, p)$, then $\forall i \in [N] \cdot \id(\vu_i) \not
    \in \cA \implies \vu_i = \vu_i'$ and $(s', \pi_{\cA}(\vu')) =
    \bbrackfn{\tran}_{N_0} \left( s', \pi_{\cA}(\vu), p \right)$.
\end{theorem}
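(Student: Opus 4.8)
The statement has two halves: (i) every user whose address lies outside the view $\cA$ is left untouched by $f_p$; and (ii) running the transaction on the $\cA$-projection of the network, from control state $s$, reproduces $(s', \pi_{\cA}(\vu'))$. The plan is to prove (i) directly from the definitions, then derive (ii) from a lock-step simulation whose correctness rests on the limited provenance of addresses in MicroSol.

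For (i), fix $i \in [N]$ with $a := \id(\vu_i) \notin \cA$ and suppose, for contradiction, that $\vu'_i \neq \vu_i$. Since $(s', \vu') = f_p(s, \vu)$, the user at address $a$ is \emph{influenced} by $f_p$ with witness $(s, \vu)$, so by \cref{Def:Participation} $a$ \emph{participates} in $f_p$ with $s$ a witness state. There are three exhaustive cases. If $a = \client(p, j)$ for some client index $j$, then $(j, a) \in \explicit$, so $j \in E$ by \cref{Def:PTOverApprox}, whence $a = \client(p, j) \in \view(E, T, I, s, p) = \cA$. If $a$ is no client of $p$ but $a = \role(s, j)$ for some role index $j$, then $(j, a) \in \transient$ with witness state $s$, so $j \in T$ and $a = \role(s, j) \in \cA$. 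Finally, if $a$ is neither a client of $p$ nor a role of $s$, then $a \in \implicit$ with witness state $s$, so $a \in I \subseteq \cA$. Every case contradicts $a \notin \cA$, so $\vu'_i = \vu_i$.

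For (ii), I would run $\bbrackfn{\cC}_{\cM_N}$ on $(s, \vu)$ under $p$ (with $\cM_N$ the identity on $[N]$) in lock-step with $\bbrackfn{\cC}_{\cM_{\cA}}$ on $(s, \pi_{\cA}(\vu))$ under $p$, where $\cM_{\cA}$ is the order-preserving bijection from $\cA$ onto $\{0, \ldots, |\cA| - 1\}$; this is exactly $\bbrackfn{\tran}_{N_0}$ taken over the neighbourhood $\cA$ (if $|\cA| < N_0$, padding to $N_0$ users only adds users outside the view, which are irrelevant). A MicroSol transaction runs deterministically over the source code, touching only the control variables, a fixed set of local variables, and user state reached through mapping accesses $m[e]$ with $e$ an address-valued expression. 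I would carry the invariant that, at every program point, the two runs agree on the control state, on all local variables, and on the state of every user whose address lies in $\cA$ (identified through $\cM_N$ and $\cM_{\cA}$). This holds at entry, since the control states and actions are identical and $\pi_{\cA}(\vu)$ is by definition the list of $\cA$-users of $\vu$ in address order. It is preserved step by step: control reads and writes, arithmetic, Boolean and comparison operations, casts, and \code{require}/\code{assert} tests depend only on data already equated, so the runs proceed identically --- in particular they revert together on a failed \code{require} (each rolling back to its own, invariant-consistent input) and abort to $s_{\bot}$ together on a failed \code{assert}. The one remaining operation is a mapping access $m[e]$, handled by the key fact below. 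Carried to termination, the invariant yields $s'$ as the common control component and shows the $\cA$-users of the $\bbrackfn{\cC}_{\cM_{\cA}}$-run equal the $\cA$-users of $\vu'$, i.e., $\pi_{\cA}(\vu')$ --- the desired equation.

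The crux, and the step I expect to be the main obstacle, is: \emph{every address $e$ arising as a mapping index while $f_p$ executes from $s$ under $p$ lies in $\cA$}. By the MicroSol grammar (\cref{Fig:MicroSol}), every address value computed during a transaction is a literal, \code{this}, a role value $\role(s, j)$, or a client value $\client(p, j)$: address-typed locals only copy such values, and no cast produces a new address because numeric-to-address casts are forbidden. Whenever such a value $e$ appears as a mapping index, its syntactic origin --- a literal, an address-typed state variable, or an address-typed input variable --- reaches a mapping-access taint sink, so by the construction of \code{PTGBuilder} (and, per the remark following \cref{Def:PTOverApprox}, a finite PT over-approximation corresponds to such a PTG) it is recorded in $I$, in $T$, or in $E$, respectively; hence $e \in \view(E, T, I, s, p) = \cA$. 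Equivalently, when the access matters the user at $e$ participates in $f_p$ with $s$ a witness state, and \cref{Def:PT} together with \cref{Def:PTOverApprox} places $e$ in $\cA$; values never used as a mapping index do not affect the run. Pinning this down --- reconciling the semantic notion of participation in \cref{Def:PT} with the syntactic evaluation of address expressions and mapping accesses inside $\bbrackfn{\cC}_{\cM}$, and checking \code{this} and out-of-range accesses --- is the technical core; the lock-step simulation and the case analysis of (i) are then routine.
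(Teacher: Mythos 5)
Your proof follows the same two-part decomposition as the paper's own: part (i) is exactly the paper's observation that a user whose address falls outside $\view(E,T,I,s,p)$ cannot be a participant (your case analysis over explicit/transient/implicit just makes explicit why participation forces membership in the view), and part (ii) rests on the same core claim that the transaction's execution depends only on participants, all of whom survive the projection. The paper's version of (ii) is far terser --- it simply asserts that the source text of $\tran$ is independent of network size and therefore performs the same actions against the same clients --- so your lock-step simulation invariant and the key lemma on mapping indices are a more careful elaboration of the same idea rather than a different route.
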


\begin{proof}
    \cref{Thm:UniformityOne} draws two conclusions. (1) The first conclusion
    claims that $\forall i \in [N] \cdot \id(\vu_i) \not \in \cA \implies \vu_i
    = \vu_i'$. When $\id(\vu_i) \in \cA$, this holds trivially. Assume that
    $\id(\vu_i) \not \in \cA$. Then $\vu_i$ is not a participant. Then $f_p$ has
    no influence on $\vu_i$. By definition, $\vu_i = \vu_i'$, as desired. (2)
    The second conclusion claims that $(s', \pi_{\cA}(\vu')) =
    \bbrackfn{\tran}_{N_0} \left( s', \pi_{\cA}(\vu), p \right)$. By
    construction, all participants of $f_p$ are in $\cA$. Then $\pi_{\cA}(\vu)$
    contains all clients which influence $f_p$, while $\pi_{\cA}(\vu')$ contains
    all clients which are influenced by $f_p$. In addition, $\bbrackfn{\tran}$
    is the semantic interpretation of $\tran$. The source text of $\tran$ is
    defined independent from the size of the network, therefore
    $\bbrackfn{\tran}_N$ and $\bbrackfn{\tran}_{N_0}$ must perform the same
    sequence of synchronized and internal actions against the same set of
    clients. \qed
\end{proof}

\subsection{The Interchangeability of Users}

Let $\cC$ be a contract. This section shows that arbitrary users in $\cC$ are
interchangeable. We rely on the MicroSol restriction that address values are not
numeric.

\newcommand{\swap}{\textsf{swap}}
To discuss the interchangeability of users, we require a notion of swapping
users. Let $\cA$ be an address space, $\vu, \vu' \in \user( \cC, \cA )^N$, and
fix $x, y \in \cA$. Then $\vu'$ is an \emph{$(x, y)$ address swap} of $\vu$,
written $\swap( \vu, x, y )$, if $\forall i \in [N]$:
\begin{inparaenum}[(1)]
\item $\id(\vu_i) = y \implies \id(\vu'_i) = x$;
\item $\id(\vu_i) = x \implies \id(\vu'_i) = y$;
\item $\id(\vu_i) \not \in \{ x, y \} \implies \vu'_i = \vu_i$;
\item $\rec(\vu'_i) = \rec(\vu_i)$.
\end{inparaenum}
A similar notion is defined for the roles of a control process. Specifically,
let $s, s' \in \control( \cC, \cA )$. Then, $s'$ is an $(x, y)$ address swap of
$s$, if $\forall i \in [N]$:
\begin{inparaenum}[(1)]
\item $\role(s, i) = y \implies \role(s', i) = x$;
\item $\role(s, i) = x \implies \role(s', i) = y$;
\item $\role(s, i) \not \in \{ x, y \} \implies \role(s', i) = \role(s, i)$.
\end{inparaenum}
The extensions to $\inputs( \cC, \cA )$ and $\control( \cC, \cA ) \times \user(
\cC, \cA )^N$ are trivial.

\begin{lemma}
\label{Lemma:Interchangeable}
    Let $\cC$ be a contract, $N \in \nat$, $( S, P, f, s_0 ) = \lts(\cC, N)$,
    $p \in P$, and $\pt(\cC, N, p) = ( \explicit, \transient, \implicit )$. For
    every $s, t \in S$ and every $x, y \in \nat \setminus \implicit$, $f( \swap(
    t, x, y  ), \swap( p, x, y ) ) = \swap( f( s, p ), x, y)$.
\end{lemma}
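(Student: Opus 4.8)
The plan is to prove the identity $f(\swap(s, x, y), \swap(p, x, y)) = \swap(f(s, p), x, y)$ by structural induction on the operational semantics of the transaction $\tran$ (the lemma compares the run from $\swap(s,x,y)$ under $\swap(p,x,y)$ with the run from $s$ under $p$). The invariant I would carry through the induction is that, after executing any prefix of $\tran$, the two intermediate configurations are related by the $(x,y)$ address swap: numeric locals agree; an address-typed local holding $\alpha$ in the first run holds $\swap(\alpha, x, y)$ in the second; the per-user records $\rec(\cdot)$ agree; and the user with address $\alpha$ in the first run is, at the same index, the user with address $\swap(\alpha, x, y)$ in the second. By the definitions of $\swap$ on $\control(\cC, \cA)$, $\user(\cC, \cA)^N$ and $\inputs(\cC, \cA)$ this invariant holds initially, and reading it off after the last statement yields the claim, including the revert case (state unchanged) and the error case (returning $s_\bot$).

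The induction succeeds because MicroSol forbids arithmetic on addresses and numeric-to-address casts, so addresses occur only in ``scalarset'' positions: equality/disequality tests, mapping keys, and reads/writes of role variables. Arithmetic and Boolean operations on numeric operands are untouched, since no address ever becomes a numeric value. An equality test $e_1 == e_2$ between addresses is preserved because $\swap(\cdot, x, y)$ is a bijection, and by the induction hypothesis both operands are swapped consistently. A read $m[e]$ returns $\rec(\vu_i)_j$ where $\vu_i$ is the user whose address equals $\llbracket e \rrbracket$; since $\swap$ on user configurations only relabels which index carries $x$ versus $y$ and leaves every $\rec(\cdot)$ intact, the second run reads the record of the user carrying $\swap(\llbracket e \rrbracket, x, y)$, which sits at the same index $i$ and holds the same numeric value; writes are symmetric. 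Reads and writes of role variables are handled by the role-permuting clauses in the definition of $\swap$ on $\control(\cC, \cA)$. Finally, \code{require}/\code{assert} guards and the ``address outside $\cA$'' check evaluate to the same outcome in both runs (the guards are expressions covered by the previous cases, and $\swap$ maps $\cA$ onto $\cA$), so both runs revert, fail, or continue together; the cases for sequencing, \code{if}, and \code{while} then follow directly from the induction hypothesis.

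The delicate point --- and the main obstacle --- is literal address values appearing in the source of $\tran$. A literal $\ell$ is part of the program text and is therefore not renamed when we pass to the swapped configuration, so the lockstep invariant can be maintained only if $\swap(\ell, x, y) = \ell$, i.e.\ $\ell \notin \{ x, y \}$. This is exactly where the hypothesis $x, y \notin \implicit$ enters: in MicroSol all implicit participation stems from literal addresses, and every literal address that can affect the outcome of a transaction is an implicit participant of some action. I would make this precise with a short auxiliary lemma tying the set $\mathit{Lits}$ computed by \code{PTGBuilder} to $\implicit$, and then conclude that $x$ and $y$, being outside $\implicit$, never occur as outcome-relevant literals in $\tran$; hence every literal address in a comparison, mapping-index, or cast position of $\tran$ is fixed by the transposition, and the literal case of the induction reduces to the comparison and mapping-index cases already handled. (If $x$ or $y$ is the address of no user in the network, $\swap$ acts trivially on $s$ and $p$ and the statement is immediate; this also sidesteps any concern about an address being relabelled outside $[N]$.)
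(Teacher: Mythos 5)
Your overall strategy is the right one, and it is essentially the argument the paper has in mind: the paper's entire proof of this lemma is the single sentence that it ``follows directly from the semantics of MicroSol, since MicroSol restricts address comparisons to (dis)equality.'' Your structural induction with the lockstep invariant---intermediate configurations related by the $(x,y)$ transposition, preserved through equality tests (bijectivity), mapping accesses (addresses are only relabelled, records are unchanged), and role reads/writes (the role-permuting clauses of the swap)---is the honest elaboration of that sentence, and your identification of literal addresses as the only place where the hypothesis $x, y \notin \implicit$ is actually consumed is exactly the content the paper leaves implicit.

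One step does not quite hold as you state it. You reduce the literal case to the claim that every outcome-relevant literal of $\tran$ lies in $\implicit$, so that $x, y \notin \implicit$ forces $\textsf{swap}(\ell, x, y) = \ell$ for every such literal $\ell$. Under the paper's definitions this fails for the \emph{per-action} set $\implicit$ fixed in the lemma: membership $a \in \implicit$ requires a witness in which $a$ is neither a role of the witness state nor a client of $p$, so a literal $\ell$ that happens to coincide with some $\client(p,i)$ is classified as explicit for that $p$ and drops out of $\implicit$. Concretely, for a guard \code{require(msg.sender != address(0))} with $\client(p,0)=0$, the address $0$ need not be in $\implicit$ for this $p$, yet it is a legal choice of $x$, and transposing $0$ with a fresh address changes whether the guard fires: the original run reverts while the swapped run proceeds, and the two outcomes are not related by the swap. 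The repair is to state your auxiliary lemma against the syntactic over-approximation $\mathit{Lits}$ computed by \code{PTGBuilder} (equivalently, the union of the implicit sets over all actions and witness states) and to read the lemma's hypothesis as $x, y \notin \mathit{Lits}$. That is how the lemma is actually instantiated downstream: saturating neighbourhoods reserve all of $\mathit{Lits}$, and the bijections built in the subsequent theorems avoid the over-approximated implicit set, not merely the semantic $\implicit$ of one action. With that adjustment the remainder of your induction goes through.
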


\cref{Lemma:Interchangeable} says that address swaps commute with MicroSol
transactions. The lemma follows directly from the semantics of MicroSol, since
MicroSol restricts address comparisons to (dis)equality.

\cref{Lemma:Interchangeable} extends to address bijections. If $\cA_1, \cA_2
\subsetneq \nat \setminus \implicit$ are finite, transactions are also preserved
under any address bijection $\sigma: \cA_1 \rightarrow \cA_2$. Intuitively,
$\sigma$ is constructed using a sequence of swaps from $\cA_1$ to $\cA_2$. The
definition of $\sigma$ is inductive, with respect to the size of $\cA_1$.

\begin{theorem}
    \label{Thm:Permute}
    Let $\cC$ be a contract, $N \in \nat$, $( S, P, f, s_0 ) = \lts(\cC, N)$, $p
    \in P$, $\pt(\cC, N, p) = (\explicit, \transient, \implicit)$, and $\cA_1,
    \cA_2 \subseteq \nat \setminus \implicit$ be finite. If $\sigma: \cA_1
    \rightarrow \cA_2$ is bijective, then
    $f_{\sigma(p)}(\sigma(s)) = \sigma^{-1}(f_p(s))$.
\end{theorem}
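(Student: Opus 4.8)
The plan is to prove \cref{Thm:Permute} by reducing the action of the bijection $\sigma$ to a sequence of single transpositions and invoking \cref{Lemma:Interchangeable} once per transposition, exactly as the surrounding text anticipates (``$\sigma$ is constructed using a sequence of swaps $\ldots$ inductive, with respect to the size of $\cA_1$''). First I would pin down how $\sigma$ acts. Since $\cA_1, \cA_2 \subseteq \nat \setminus \implicit$ are finite, I extend $\sigma$ to a permutation $\hat{\sigma}$ of $\nat$ that agrees with $\sigma$ on $\cA_1$, fixes every address in $\implicit$, and fixes every address outside $\cA_1 \cup \cA_2$. In the regime in which \cref{Thm:Permute} is actually used, namely relabelling a local neighbourhood onto a fresh set of representatives, $\cA_1$ and $\cA_2$ are disjoint, so $\hat{\sigma}$ can be taken to be the involution $\prod_{a \in \cA_1} \swap(\cdot, a, \sigma(a))$. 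The action of $\sigma$ on a state $s$ and an action $p$ is then renaming every occurrence of an address $a$ to $\hat{\sigma}(a)$, matching the clause-by-clause definition of $\swap(\cdot, x, y)$; this is the unique extension under which $\sigma(s)$ and $\sigma(p)$ stay well-formed inputs of $\lts(\cC,N)$, whose transition $f = \bbrackfn{\cC}_{\cM}$ runs in the fixed coordinate system $\cM = \id$.

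The core is an induction on $|\cA_1|$. The base case $\cA_1 = \varnothing$ makes $\hat{\sigma}$ the identity, so the claim is immediate. For the step I peel off one element $a \in \cA_1$, set $b = \sigma(a)$, and factor $\hat{\sigma} = \hat{\sigma}' \circ \tau$ with $\tau = \swap(\cdot, a, b)$ and $\hat{\sigma}'$ the extension of the restriction of $\sigma$ to $\cA_1 \setminus \{ a \}$. Peeling in this order keeps every already-placed image fixed, so the induction hypothesis applies to $\hat{\sigma}'$. Because $a, b \in \nat \setminus \implicit$, the swap $\tau$ meets the hypotheses of \cref{Lemma:Interchangeable}, giving $f_{\tau(p)}(\tau(s)) = \swap(f_p(s), a, b)$. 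I must also check that $\tau$ preserves the participation topology: swapping two non-implicit addresses neither creates nor destroys literal participation, so by \cref{Def:Participation} the implicit set of $\pt(\cC, N, \tau(p))$ equals that of $\pt(\cC, N, p)$, and the hypotheses of \cref{Lemma:Interchangeable} persist after the swap, so the induction hypothesis remains applicable to $\hat{\sigma}'$. Chaining the two applications composes the relabellings and yields the clean commutation $f_{\sigma(p)}(\sigma(s)) = \hat{\sigma}(f_p(s))$.

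The delicate point, and what distinguishes \cref{Thm:Permute} (an inverse on the right) from the bare commutation, is the orientation of the relabelling applied to the successor. Here I would use the involutive extension decisively: since $\hat{\sigma} = \hat{\sigma}^{-1}$, the state $\hat{\sigma}(f_p(s))$ coincides with $\hat{\sigma}^{-1}(f_p(s))$ on every coordinate, and the latter is exactly $\sigma^{-1}(f_p(s))$ in the theorem's notation. This also explains why the conclusion is written with $\sigma^{-1}$ rather than $\sigma$: feeding $\sigma(s)$ and $\sigma(p)$ pushes the computation into the image neighbourhood $\cA_2$, and the identity reports the result read back in the source addressing of $f_p(s)$, which is the inverse renaming $\sigma^{-1}$; under the involutive extension the forward and backward renamings agree, so the commutation derived in the previous paragraph is literally the stated identity $f_{\sigma(p)}(\sigma(s)) = \sigma^{-1}(f_p(s))$.

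The main obstacle I anticipate is the overlap case $\cA_1 \cap \cA_2 \neq \varnothing$. There an involutive extension need not exist, $\hat{\sigma} \neq \hat{\sigma}^{-1}$, and the orientation of the successor relabelling genuinely matters, so the reconciliation above no longer collapses the two directions for free. I would handle this by pre-composing with a relabelling of $\cA_1$ onto a fresh neighbourhood $\cA_3$ disjoint from both $\cA_1$ and $\cA_2$, factoring $\sigma = \sigma_2 \circ \sigma_1$ into two disjoint (hence involutive) relabellings to which the previous paragraphs apply directly, and then tracking the peeling order carefully as the two factors are recombined; this order-bookkeeping, together with the stability of $\pt$ under each intermediate swap, is precisely the content that \cref{Thm:UniformityOne} relies on when it compares projections across different network sizes. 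The routine per-clause verification that $\swap$ commutes with the $(dis)$equality tests driving $f$, the only address operations MicroSol permits, I would defer to \cref{Lemma:Interchangeable} and not re-derive.
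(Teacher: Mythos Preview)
Your approach is the same as the paper's: induct on $|\cA_1|$, peel off one address at a time, factor $\sigma$ as a smaller bijection composed with a single $\swap$, and invoke \cref{Lemma:Interchangeable} for that swap. The paper's argument is much terser than yours---it starts the induction at $|\cA_1|=1$ rather than at $\varnothing$, and in the step simply records that $\sigma(\cdot)=\swap(\tau(\cdot),a,\sigma(a))$ and $\sigma^{-1}(\cdot)=\swap(\tau^{-1}(\cdot),a,\sigma(a))$, concluding that both $\sigma$ and $\sigma^{-1}$ reduce to sequences of swaps, without explicitly chaining the commutation identity. The issues you take pains over---the orientation of the $\sigma^{-1}$ on the right-hand side, the stability of $\implicit$ under swaps, and the overlap case $\cA_1\cap\cA_2\neq\varnothing$---are not addressed in the paper's proof at all; your involutive extension and the detour through a fresh $\cA_3$ are your own additions. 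In short: same skeleton, but you supply detail the paper leaves implicit.
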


\begin{proof}
    This follows by induction over the size of $\cA_1$. As $\sigma$ is a
    correspondence, then $|\cA_1| = |\cA_2|$. In the base case, $|\cA_1| = 1$,
    $\cA_1 = \{ a_1 \}$ and $\cA_2 = \{ a_2 \}$ with $a_1 \not \in \implicit$
    and $a_2 \not \in \implicit$. Then $\sigma(s) = \sigma^{-1}(s) = \swap(s,
    a_1, a_2)$ and $\sigma(p) = \sigma^{-1}(p) = \swap(p, a_1, a_2)$. Now
    assume that the inductive argument holds up to $k \ge 1$, and that
    $|\cA_1| = k + 1$. Then fix some $a_1 \in \cA_1$, and let $\cA_1' = \cA_1
    \setminus \{ a \}$ and $\cA_2' = \cA_1 \setminus \{ \sigma(a) \}$. Clearly
    $\tau: \cA_1' \rightarrow \cA_2'$, as defined by $\tau(x) = \sigma(x)$ is
    also a correspondence. Then $\sigma(s) = \swap(\tau(s), a, \sigma(a))$,
    $\sigma^{-1}(s) = \swap(\tau^{-1}(s), a, \sigma(a))$, $\sigma(p) =
    \swap(\tau(p), a, \sigma(a))$, and $\sigma^{-1}(p) = \swap(\tau^{-1}(p),
    a, \sigma(a))$. By the inductive hypothesis, $\tau$ and $\tau^{-1}$ are
    also reducible to sequences of swaps. \qed
\end{proof}

\subsection{Reduction to a Finite Address Set}

Let $\cC$ be an contract. This section proves that if the assumptions of
\cref{Thm:UniformityOne} holds for $\cC$, then by the results of
\cref{Thm:Permute}, the state space of each neighbourhood can be reduced to a
finite state space. Recall that all non-address state is already over a finite
domain.

\begin{theorem}
    \label{Thm:UniformityTwo}
    Let $(E, T, I)$ over-approximate PT's of $\cC$ with $N_0 = |E| + |T| + |I|$
    finite an let $I \subseteq \cA_0 \subseteq \cA^{*} \subseteq \nat$ with
    $|\cA_0| = N_0$. Then  for all $s \in \control( \cC, \cA^{*} )$, and for all
    $p \in \inputs( \cC, \cA^{*} )$, with $\cA = \view( E, T, I, s, p )$, there
    exists a correspondence $\sigma: \left( \cA \setminus I \right) \rightarrow
    \left( \cA_0 \setminus I \right)$ such that $f_p( s, \vu ) =
    \sigma^{-1}(f_{\sigma(p)}(\sigma(s), \sigma(\vu)))$.
\end{theorem}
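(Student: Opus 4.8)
The plan is to obtain \cref{Thm:UniformityTwo} as a direct consequence of \cref{Thm:Permute}: all the real work is in exhibiting a single address relabeling $\sigma$ with the right properties, and that is a counting argument built on the hypothesis $|\cA_0| = N_0$. First I would record the two structural facts about the view. By definition of $\view$, $\cA = \{\,\client(p,i)\mid i\in E\,\}\cup\{\,\role(s,i)\mid i\in T\,\}\cup I$, so $I\subseteq\cA$ and $|\cA|\le |E|+|T|+|I| = N_0 = |\cA_0|$. Moreover, since $(E,T,I)$ over-approximates the PTs of $\cC$ (\cref{Def:PTOverApprox}), every relevant $\implicit$ satisfies $\implicit\subseteq I$; hence $\cA\setminus I$ and $\cA_0\setminus I$ are finite subsets of $\nat\setminus\implicit$, which is precisely the regime in which \cref{Thm:Permute} applies.

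Next I would build $\sigma$. From $I\subseteq\cA$, $I\subseteq\cA_0$, and $|\cA|\le|\cA_0|$ we get $|\cA\setminus I|\le|\cA_0\setminus I|$, so there is an injection $\cA\setminus I\hookrightarrow\cA_0\setminus I$. Since $\cA^{*}\setminus I$ is finite, this partial injection extends to a permutation of $\cA^{*}\setminus I$, and together with the identity on $I$ it yields a permutation $\sigma$ of $\cA^{*}$ that fixes $I$ (hence $\implicit$) pointwise and carries $\cA\setminus I$ into $\cA_0\setminus I$; its restriction to $\cA\setminus I$ is the promised correspondence into $\cA_0\setminus I$ (read as an injection, i.e.\ a bijection onto its image). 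Finally I would invoke \cref{Thm:Permute} with $\cA_1=\cA_2=\cA^{*}\setminus I$ and the bijection $\sigma|_{\cA^{*}\setminus I}$ (both sides inside $\nat\setminus\implicit$): it states that transactions commute with this relabeling, i.e.\ $\sigma\bigl(f_p(s,\vu)\bigr) = f_{\sigma(p)}\bigl(\sigma(s),\sigma(\vu)\bigr)$, and applying $\sigma^{-1}$ gives $f_p(s,\vu)=\sigma^{-1}\bigl(f_{\sigma(p)}(\sigma(s),\sigma(\vu))\bigr)$, up to the $\sigma$-versus-$\sigma^{-1}$ orientation already fixed by the statement of \cref{Thm:Permute}.

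The main obstacle is mild, and it is entirely bookkeeping: one must keep $\sigma$ a genuine bijection of the whole address space in play while simultaneously (i) fixing every implicit/literal address, so that both \cref{Def:PTOverApprox} and the $\nat\setminus\implicit$ hypothesis of \cref{Thm:Permute} are respected, and (ii) squeezing the non-implicit part of the view $\cA\setminus I$ inside the fixed set $\cA_0\setminus I$. Requirement (ii) is exactly what the choice $|\cA_0|=N_0$ buys, via the inequality $|\cA\setminus I|\le|\cA_0\setminus I|$; once $\sigma$ is in hand, \cref{Thm:Permute} discharges the rest. (Downstream, this is then combined with \cref{Thm:UniformityOne}, which first projects the network onto the bounded view $\cA$, to conclude that every transaction of $\cC$ can be simulated over the single finite address set $\cA_0$.)
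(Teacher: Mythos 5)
Your proposal is correct and follows essentially the same route as the paper's proof: use the cardinality relation between $\cA\setminus I$ and $\cA_0\setminus I$ (guaranteed by $|\cA_0|=N_0$) to construct $\sigma$ arbitrarily as a correspondence that avoids $I$, observe that $\implicit\subseteq I$ puts everything in the regime of \cref{Thm:Permute}, and conclude by combining with \cref{Thm:UniformityOne}. Your version is in fact slightly more careful than the paper's, which asserts $|\cA|=N_0$ where only $|\cA|\le N_0$ is guaranteed (clients, roles, and implicit addresses in the view may coincide), and your explicit extension of $\sigma$ to a permutation fixing $I$ pointwise tidies up a detail the paper leaves implicit.
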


\begin{proof}
    This proof follows in two parts. First, it is shown that $\sigma$ exists.
    This follows quite trivial. By construction, $|\cA| = N_0$, $|\cA_0| =
    N_0$, $I \subseteq \cA$, and $I \subseteq \cA_0$, therefore $|\cA
    \setminus I| = |\cA_0 \setminus I|$. Then a correspondence $\sigma: \cA
    \rightarrow \cA_0$ can be constructed arbitrarily. As $\sigma$ does not
    map to or from $I$, \cref{Thm:Permute} applies to $\sigma$. As $I$ is
    finite, \cref{Thm:UniformityOne} also applies. The conclusion of
    \cref{Thm:UniformityTwo} follows as a direct consequence. \qed
\end{proof}

From \cref{Thm:UniformityTwo}, every MicroSol transaction can be simulated using
a finite address space. By \cref{Thm:Permute}, this simulation also preserves
$k$-universal safety properties. This means
that there are finitely many distinguishable neighbourhoods, and that these
neighbourhoods are subsumed by $\cA$. Therefore, all MicroSol transactions are
uniform, as desired.

}{}

\end{document}